\newtheorem{proposition}{Proposition}[section]
\newcommand{\multiline}[1]{%
  \begin{tabularx}{\dimexpr0.9\linewidth-\ALG@thistlm}[t]{@{}X@{}}
    #1
  \end{tabularx}
}
\journal{Notes}
\begin{document}

\begin{frontmatter}

%% Title, authors and addresses
\title{Generating drawdown-realistic financial price paths using path signatures}

%% use the tnoteref command within \title for footnotes;
%% use the tnotetext command for the associated footnote;
%% use the fnref command within \author or \address for footnotes;
%% use the fntext command for the associated footnote;
%% use the corref command within \author for corresponding author footnotes;
%% use the cortext command for the associated footnote;
%% use the ead command for the email address,
%% and the form \ead[url] for the home page:
%%
%% \title{Title\tnoteref{label1}}
%% \tnotetext[label1]{}
%% \author{Name\corref{cor1}\fnref{label2}}
%% \ead{email address}
%% \ead[url]{home page}
%% \fntext[label2]{}
%% \cortext[cor1]{}
%% \address{Address\fnref{label3}}
%% \fntext[label3]{}

%% use optional labels to link authors explicitly to addresses:
%% \author[label1,label2]{<author name>}
%% \address[label1]{<address>}
%% \address[label2]{<address>}

\author{Emiel Lemahieu\footnotemark[1]\footnotemark[2]\footnotemark[3], Kris Boudt\footnotemark[1], Maarten Wyns\footnotemark[2]}

\address{Ghent University, InvestSuite}

\begin{abstract}
%% Text of abstract
A novel generative machine learning approach for the simulation of sequences of financial price data with drawdowns quantifiably close to empirical data is introduced. Applications such as pricing drawdown insurance options or developing portfolio drawdown control strategies call for a host of drawdown-realistic paths. Historical scenarios may be insufficient to effectively train and backtest the strategy, while standard parametric Monte Carlo does not adequately preserve drawdowns. We advocate a non-parametric Monte Carlo approach combining a variational autoencoder generative model with a drawdown reconstruction loss function. To overcome issues of numerical complexity and non-differentiability, we approximate drawdown as a linear function of the moments of the path, known in the literature as path signatures. We prove the required regularity of drawdown function and consistency of the approximation. Furthermore, we obtain close numerical approximations using linear regression for fractional Brownian and empirical data. We argue that linear combinations of the moments of a path yield a mathematically non-trivial smoothing of the drawdown function, which gives one leeway to simulate drawdown-realistic price paths by including drawdown evaluation metrics in the learning objective. We conclude with numerical experiments on mixed equity, bond, real estate and commodity portfolios and obtain a host of drawdown-realistic paths.
\end{abstract}

\begin{keyword}
Drawdown, Simulation, Non-Parametric Monte-Carlo, Path Functions, Signatures
%% MSC codes here, in the form: \MSC code \sep code
%% or \MSC[2008] code \sep code (2000 is the default)
\MSC[] 91B28, 91B84
\end{keyword}
\end{frontmatter}

%%
%% Start line numbering here if you want
%%
% \linenumbers

%% main text
\section{Introduction}
\label{s1_introduction}
\addtocounter{footnote}{1}
\footnotetext{Ghent University, Sint-Pietersplein 6, 9000 Gent, Belgium}
\addtocounter{footnote}{1}
\footnotetext{InvestSuite, \url{www.investsuite.com}}
\addtocounter{footnote}{1}
\footnotetext{Corresponding author at \href{mailto:emiel.lemahieu@ugent.be}{emiel.lemahieu@ugent.be}}
Market generators are generative machine learning (ML) models with the specificity of modeling financial markets such as stock price or return time series, order books, implied volatility surfaces, and more. They have gained popularity in recent years (\citet{wiese2020quant}, \citet{koshiyama2021generative}, \citet{cont2022tail}, \citet{bergeron2022variational}).  Arguably the main advantage of generative ML over existing simpler solutions such as standard Monte Carlo engines is being able to simulate realistic financial time series without having to specify a data generating process (DGP) a priori. Scenario realisticness relates to low distributional divergences between the actual and generated return distributions (such as difference in normalized moments or max mean discrepancy, entropy or Kullback-Leibler divergence, optimal transport measures like Wasserstein distance, etc.), and distances between their autocorrelation functions and tail indices. Some striking results in the latter regard have been found in \citet{wiese2020quant}, \citet{cont2022tail}, \citet{buehler2020data} and \citet{fingan2023}.

A drawdown is a price fall relative to its historical maximum. In this article, we argue that one can use a market generator as a non-parametric Monte Carlo engine in order to solve the difficult problem of generating realistic drawdown scenarios. With parametric DGP approaches, no explicit analytical link between the underlying DGP parameters and their corresponding drawdown distribution is known, unless under very restrictive assumptions. Parametric approaches to drawdown have commonly relied on Brownian assumptions as to use Lévy's theorem to obtain this analytical drawdown distribution.  Especially in applications where this measure is crucial one would need theoretical guarantees that upon convergence of the parameters of the generative model drawdowns are explicitly preserved. Examples include pricing drawdown options for max loss insurance (\citet{carr2011maximum}, \citet{atteson2022carr}) and controlling for drawdown in a portfolio optimization context (\citet{chekhlov2004portfolio}, \cite{chekhlov2005drawdown}).

The main contribution of this paper is overcoming the issues of non-differentiability and numerical complexity inherent to the drawdown function by approximating it with a linear regression on the path's  signature terms. Signatures serve as the moment generating function on the path space and are because of their universality (\citet{chevyrev2018signature}) the likely candidate for universal approximation of drawdown with a linear function. We prove the required regularity of drawdown function and the consistency of the approximation. Furthermore, we discuss the adequacy of the approximation as a function of the approximation order and for different levels of roughness and sample sizes.  We argue that by adding weighted moments of the path to the reconstruction loss of the market generator, one can obtain realistic drawdown scenarios, solving the understatement of adverse drawdown scenarios of traditional generative model architectures or common DGP assumptions.

The article is structured as follows. Section \ref{market generators section} covers the background on market generators and positions our problem setting within this literature. Section \ref{section drawdowns} discusses drawdowns, notations and its Lipschitz continuity following from its reformulation as a non-linear dynamic system. Section \ref{rough path theory section} summarizes some key elements from rough path theory and the approximation of functions on paths. We propose a linear approximation of drawdowns in the signature space. Section \ref{market generator} introduces our market generator model and its implementation. Section \ref{numerical experiments} covers our numerical study with experiments on fractional Brownian and empirical data. Section \ref{conclusion} concludes.

\section{Market Generators}\label{market generators section}

This section briefly covers the background on market generators, related papers and their commonalities that lead to the questions posed in this paper.

The potential benefits of generative machine learning for modeling financial time series have been early recognized by \citet{kondratyev2019market}, \citet{henry2019generative}, and \citet{wiese2020quant}, who first applied restricted Boltzmann machines (\citet{smolensky1986information}) and generative adversarial networks (\citet{goodfellow2020generative}) respectively to financial sequences data. Since those papers, a host of use cases have been proposed that include time series forecasting (\citet{wiese2020quant}), trading strategy backtesting (\citet{koshiyama2021generative}), hands-off volatility surface modeling (\citet{cont2022simulation}), option pricing and so-called \textit{deep hedging} (\citet{buehler2020data}, \cite{buehler2021deep}), and more.

Closest related to this work is the paper of \citet{buehler2020data} who also use the combination of variational autoencoder and signatures, with the general aim of reproducing some of the stylized facts of financial time series (\citet{cont2001empirical}). However, they use signatures in the input space and then output signature values, which implies that deploying their model requires a scalable inversion of signatures back to paths, which is far from trivial. Similar to \citet{ni2020conditional} who use signatures in the discriminator of a GAN architecture, the generator in this paper does not output signatures but only uses them in the loss evaluation step. Moreover, much work on market generators has revolved around adjusting the loss function such that desired features of the timeseries are reproduced. This is very close to our proposed approach. \citet{cont2022tail} evaluate the tail risk (value-at-risk and expected shortfall) of the simulations to evaluate their adequacy. Recently, \citet{fingan2023} included profit-and-loss (P\&L) similarity and Sharpe ratios in the loss function to increase the financial realism of the generated scenarios. This paper builds on top of that. Drawdown is a similar metric that in most financial contexts would be a useful feature to reproduce, because it captures both P\&L and autocorrelation, but in fact is in some financial contexts the most important metric, such as portfolio drawdown control, optimization, or drawdown insurance. For instance, some investment strategies or fund managers get automatically closed down if they breach certain drawdown limits and it is thus crucial that a simulation of their strategies does not understate the probability of larger drawdowns. However, as a path functional, rather than a function on a P\&L distribution like a tail loss or Sharpe ratio, drawdown is much more difficult to evaluate. That is why we introduce the approximation trick in the next sections.

\section{Drawdowns}\label{section drawdowns}

This section introduces the concept of a drawdown and its notation used throughout the paper. We briefly touch upon how it is usually approached in the literature, the issues for data-driven simulation of drawdowns and the expression of drawdown as a non-linear dynamic system. The latter insight, combined with the concept of a path signature, will allow us to introduce the approximation in the next sections.

\subsection{Introduction and notation}
The drawdown is the difference at a point in time $t$ between a price $S_t$ and its historical maximum up to that point. For a price path (in levels) $S: [0, T] \xrightarrow{} \mathbb{R}$ define drawdown function $\Xi$ as:
\begin{equation}\label{drawdown definition}
    \xi_t = \Xi(S)_t = \max(\max_{k<t}(S_k) - S_t, 0) 
\end{equation}

We are interested in the distribution $\mathcal{P}(\xi)$ as a function of the DGP parameters of $S$, denoted by $\theta$. It is clear that the drawdown is a non-trivial function of the underlying DGP and captures at least three dimensions. Firstly, the drift or deterministic tendency of achieving a new maximum. Secondly, the dispersion or volatility of $S$, which determines the probability of a loss vis-a-vis this monotonic drift. Thirdly, and not grasped by standard Markovian assumptions, the autocorrelation of losses or the probability that the nonzero drawdown will persist, i.e. the duration of the drawdown. The analytical link between these components or a generalized DGP and the drawdown distribution is unknown and depends on the specific process. In the drawdown literature that uses parametric simulation (e.g. \cite{carr2011maximum}\cite{atteson2022carr}\cite{rej2018you}\cite{douady2000probability}), one often assumes (geometric) Brownian motion and leverages Lévy’s theorem (\citet{levy1940certains}), which determines the joint law of the running maximum of a Brownian and the deviation to this maximum, to find the analytical distribution of drawdown as a function of the drift and volatility of the DGP. This yields closed-form expressions of the distribution $\mathcal{P}(\xi)$ (e.g. see \citet{douady2000probability} and more concisely \citet{rej2018you}). However, martingale processes like Brownian motion do not exhibit return autocorrelation and one expects a sequence of consecutive losses to result in larger drawdown scenarios. \citet{goldberg2017drawdown} assume an autoregressive process of order 1 (AR(1)) and show that increasing autocorrelation leads to more extreme drawdowns. \citet{van2020drawdowns} introduce \textit{drawdown greeks} and discuss the sensitivity of the maximum $\xi$ value to the autocorrelation parameter under normal return assumptions and find strong dependence on the assumed level of AR(1) autocorrelation. Hence, standard martingale assumptions  only lead to optimistic lower bounds on those worst case drawdown scenarios, as was also emphasized in \citet{rej2018you}. Obtaining a simulated $\mathcal{P}(\xi)$ faithful to the empirical drawdown distribution through selecting DGP parameters is thus hard to obtain, as it is very sensitive to serial dependence and the analytical link between $\theta$ and $\mathcal{P}(\xi)$ does not exist except under very restrictive assumptions (cf. \cite{van2020drawdowns}). In drawdown papers that use non-parametric simulation (e.g. \cite{chekhlov2004portfolio}\cite{chekhlov2005drawdown}\cite{molyboga2016portfolio}\cite{van2020drawdowns}), the scenarios are limited to the historical sample (e.g. historical blocks in a block bootstrap procedure), which drastically limits available samples. This jeopardizes the convergence of data-hungry models in the non-overlapping block case, and creates multicollinear (or identical) conditions for the overlapping (or oversampled) blocks case. For instance, \citet{chekhlov2005drawdown} discuss how multiple scenarios increase the effectiveness of drawdown optimizers over a single historical scenario, but also indicate how oversampling a limited dataset has diminishing returns in terms of out-of-sample risk-adjusted returns of the drawdown optimization strategy. Ideally, one has a vast number of rich drawdown scenarios, but this is limited with historical simulation and an unsolved problem for parametric Monte Carlo. One way to cope with these issues is to simply not embed the parameters that capture drawdown in the DGP, derive its distribution from said DGP assumptions or rely on few, overlapping or identical paths, but to construct a non-parametric simulation approach that relies on learning with flexible mappings rather than calibrating known parameters. We show that one can rather rely on a market generator to abstract both the DGP and the link between DGP and drawdown distribution, and learn to reproduce this distribution in a Monte Carlo. Learning means updating the parameters of an implicit DGP (e.g. the parameters of a neural network), denoted by $\theta$, over batches of training data of $S$ to increasingly improve its ability to reproduce the drawdowns in the synthetic samples, denoted by a parametrized path $S_\theta$. This Monte Carlo could be used to devise strategies that control drawdown or as a simulation engine for pricing drawdown insurance in a fully non-parametric way.

\subsection{Issues with the drawdown measure for data-driven simulation}
The problem with the drawdown measure, compared to simpler measures on the moments of the P\&L, is that at first glance it looks unsuited for learning. The reasons are twofold.
\begin{itemize}
    \item \textbf{Differentiability}: $\Xi$ is non-differentiable w.r.t. a parameterized path $S_\theta$. At first glance it looks impossible to change the DGP parameters $\theta$ by including 'feedback' on $\frac{\delta(\Xi(S_\theta))}{\delta(\theta)}$, without making simple assumptions on the specification of $\theta$.
    \item \textbf{Complexity}: evaluating the maximum of a vector of $n$ numbers takes $n$ operations or $O(n)$. Naively evaluating the running maximum takes $n(n+1)/2$ or $O(n^2)$ operations. If one would use a  smoothed approximation of the maximum (such as smooth \textit{normalized exponentials} \cite{bishop2006pattern} or \textit{softmax transformation}) to resolve non-differentiability, this would be computationally prohibitive (especially for long $n$ paths). Moreover, accuracy of such naive exponential smoothing would rely on the scale of $S$ and variation around the local maxima.
\end{itemize}

The main contribution of this paper is that these issues of non-differentiability and numerical complexity can be jointly overcome by approximating the drawdown of a path by using linear regression on its signature terms.

\subsection{Drawdown as a non-linear dynamic system and Lipschitz continuity}\label{section nl ds lip reg}

Assume the price path $S : [0, T] \xrightarrow{} \mathbb{R}$ is a piecewise linear continuous path of bounded variation (e.g. finite variance), such as interpolated daily stock prices, index levels or fund net asset values (NAV)\footnote{For the definition of boundedness in the \textit{p-variation} sense see \cite{lyons2014rough}.}. Denote by $\mathcal{V}([0,T], \mathbb{R})$ the (compact) space of all such paths. Firstly, we need to remark that while from (\ref{drawdown definition}) we know the max operator makes $\Xi$ not continuously differentiable, the variation in $\Xi$ is bounded by the variation in the paths. When taking two bounded paths the distance between their maximum values is bounded by a norm on the distance between their path values:
\begin{equation}
        |\max_{0<i<T}(S^1_i) - \max_{0<j<T}(S^2_j)| \leq \max_{i,j \in [0,T]}|S^1_i - S^2_j| = ||S^1 - S^2||_{\infty}
\end{equation}
where $|.|$ denotes the standard absolute value sign (as an $S_t$ is a scalar), while $||.||_{\infty}$ denotes the infinity norm or maximum distance between two paths (as an $S$ is a full path). In words, the distance between two maxima is capped by the maximum distance between any two values on path $S^1$ and $S^2$. This means that if two paths become arbitrarily close in terms of this distance, their respective maxima will become arbitrarily close. More specifically, the maximum is Lipschitz-$C$ continous, with distance inf-norm and $C=1$. Similar arguments can be made for $\Xi$, i.e. application of the maximum operator and a linear combination of two piecewise linear paths is Lipschitz continuous. In particular, Proposition 3.1 shows that the impact on the drawdown function of a change in the underlying paths is bounded in terms of a defined distance metric (inf-norm).

\begin{proposition}[\textbf{\textit{Lip}-regularity of $\Xi$}]\label{Prop 3.1}
Consider by $\mathcal{V}([0,T], \mathbb{R})$ the space of continuous paths of bounded variation $[0,T] \rightarrow \mathbb{R}$, two paths $S^1, S^2$ $\in \mathcal{V}$ and drawdown function $\Xi(S)_t : \mathcal{V} \rightarrow \mathbb{R} = \max_{k \leq t}(S_k) - S_t$. We have the following Lipschitz regularity for inf-norm distance $||.||$ and a regularity constant $C$:
\begin{equation}
    ||\Xi(S^1)_t-\Xi(S^2)_t|| \leq C ||S^1 - S^2|| 
\end{equation}
\end{proposition}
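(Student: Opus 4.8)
The plan is to decompose the drawdown into two Lipschitz pieces and recombine them through the triangle inequality. Writing $M(S)_t = \max_{k \leq t}(S_k)$ for the running maximum, we have $\Xi(S)_t = M(S)_t - S_t$, so the drawdown is the difference of the running-maximum functional and the (trivial) pointwise-evaluation functional. Since I will show that each piece is Lipschitz in the inf-norm, their difference is too, and the constants simply add.

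First I would establish the Lipschitz-$1$ regularity of the running maximum. This is precisely the inequality already displayed just before the proposition, now applied with the index set restricted to $\{k : k \leq t\}$: for any fixed $t$,
\[
|M(S^1)_t - M(S^2)_t| = \left|\max_{k \leq t}(S^1_k) - \max_{k \leq t}(S^2_k)\right| \leq \max_{k \leq t}|S^1_k - S^2_k| \leq \|S^1 - S^2\|_\infty .
\]
The point I would spell out is that the maximum of a family is a $1$-Lipschitz function of its arguments in the sup-norm (if $a$ is attained at $k^\ast$, then $\max_k S^2_k \geq S^2_{k^\ast} \geq a - \sup_k |S^1_k - S^2_k|$, and symmetrically); restricting the running index to $k \leq t$ only shrinks the family over which the right-hand supremum is taken, so the bound by the full inf-norm survives uniformly in $t$.

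Next, pointwise evaluation is trivially $1$-Lipschitz, $|S^1_t - S^2_t| \leq \|S^1 - S^2\|_\infty$. Combining via the triangle inequality gives, uniformly in $t$,
\[
|\Xi(S^1)_t - \Xi(S^2)_t| = \bigl|(M(S^1)_t - M(S^2)_t) - (S^1_t - S^2_t)\bigr| \leq |M(S^1)_t - M(S^2)_t| + |S^1_t - S^2_t| \leq 2\|S^1 - S^2\|_\infty ,
\]
so that taking the supremum over $t$ yields the claimed estimate with $C = 2$. If one instead adopts the floored definition $\Xi(S)_t = \max(M(S)_t - S_t,\, 0)$ of equation (\ref{drawdown definition}), I would simply precompose with the ReLU map $x \mapsto \max(x,0)$, which satisfies $|\max(a,0) - \max(b,0)| \leq |a-b|$ and is therefore itself $1$-Lipschitz, leaving the constant $C = 2$ unchanged.

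I expect no serious obstacle: the result is essentially an assembly of three elementary Lipschitz estimates. The only point requiring genuine care is the bookkeeping on the index set of the running maximum — one must check that the $1$-Lipschitz bound for the max survives the restriction to $k \leq t$ and, crucially, that the resulting constant is \emph{uniform} in $t$, so that the pointwise estimate upgrades to a bona fide path-space (inf-norm to inf-norm) Lipschitz bound rather than merely a $t$-dependent one. Beyond that the argument is a direct triangle-inequality combination, and $C = 2$ is the natural (and, given the two competing terms, essentially optimal) constant.
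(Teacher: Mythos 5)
Your proof is correct and follows essentially the same route as the paper's: split $\Xi(S^1)_t-\Xi(S^2)_t$ via the triangle inequality into the difference of running maxima and the difference of endpoint values, bound each by $\|S^1-S^2\|_\infty$, and conclude with $C=2$. Your write-up is in fact slightly more careful than the paper's (which bounds the max difference by $\max_{i,j}|S^1_i-S^2_j|$ rather than the tighter $\max_k|S^1_k-S^2_k|$, and omits the ReLU step), but the underlying argument is identical.
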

\begin{proof}
\begin{equation}
    \begin{split}
        ||\Xi(S^1)_t-\Xi(S^2)_t|| = ||\max_{i \leq t}(S^1_i) - S^1_t - \max_{j \leq t}(S^2_j) + S^2_t|| \\ 
    \leq \max_{i,j \in [0,t]}||S^1_i - S^2_j|| + ||S^2_t - S^1_t|| \leq C ||S^1 - S^2||  
    \end{split}
\end{equation}
Through triangle inequality, the Lipschitz condition holds for $C$ minimal $2$ and distance metric $||.||_{\infty}$, which concludes the proof.
\end{proof}

The difference in drawdown between two paths is thus bounded by the distance between the paths. This means that if two paths become arbitrarily close according to said distance metric, their drawdowns will become arbitrarily close. In this article, we explore what this regularity implies for local approximation. 

Observe below the differentials of $\xi$ and $S$ at a specific point in time $t$, i.e. we treat drawdown as a non-linear dynamic system which is unnatural as non-differentiability implies $\frac{d\xi_t}{dS_t}$ is not a continuous function (e.g. what one would depend on for Taylor series-like local approximations).
Consider the following dynamic system (also see \cite{douady2000probability}):
\begin{equation}\label{f definition}
  d\xi_t = f(\xi_t, dS_t) = \left\{
    \begin{aligned}
      & -dS_t, \xi_t > 0\\
      & \max(0, -dS_t), \xi_t = 0
    \end{aligned}
  \right.
\end{equation}
Depending on the current level of drawdown the effect of a price change is either linear or none, hence the derivatives are path dependent.  This conditionality illustrates the non-differentiability of $\Xi$ at a time $t$ and the fact that the differentials $|d\xi_t|\leq|dS_t|$ are bounded. This observation also provides intuition as to why local approximation is feasible. It does make sense to do an interpolation between linear and zero effects, and because of the boundedness one would not be arbitrarily off. For instance, one could assume linear dynamics but with a stochastic component derived from the average time in drawdown, i.e. the probability of a linear effect. In practice, however, the solution of this stochastic equation would still depend on our estimate of the average time in drawdown and in that sense not resolve the inherent path dependence. Hence, ideally we would express $\xi$ as some function of the vector-valued $S$ (or intervals of $S$), rather than a scalar $S_t$ (or increments $dS_t$) and treat the path dependence in a more natural way. In the next section, this is exactly what rough path theory allows us to do: solving the types of equations Eq. (\ref{f definition}) belongs to, offering strictly non-commutative (thus order or path dependent) solutions to these complex non-linear dynamic systems, allowing for a unique solution for the outputs $\xi_t$ given inputs $S$, even if their effects $f$ are not continuously differentiable. 
\clearpage

\section{Rough path theory and the approximation of functions on paths}\label{rough path theory section}

This section briefly recapitulates some of the central ideas of rough path theory, path signatures and the approximation of functions on paths. Next, we discuss the signature approximation of drawdown where we consider the drawdown of a path as an approximate linear weighing of the moments of the path. This will offer the foundation for generating weighted signatures in the market generator of Section \ref{market generator}.

\subsection{Path signatures}

Rough path theory was developed by Terry J. Lyons (\citet{lyons2007differential}, \cite{lyons2014rough}, \cite{lyons2022signature}) and concerns solving rough differential equations. 
Consider the controlled differential equation (\citet{lyons2007differential}, Eq. (2)): 
\begin{equation}\label{cde example}
    dY_t = g(Y_t)dX_t
\end{equation}
where $X$ is a path of bounded variation, called the driving signal of the dynamic system. $g$ is a mapping called the physics that models the effect of $dX_t$ on the response $dY_t$. A controlled differential equation distinguishes itself from an ordinary differential equation in that the system is controlled or driven by a path ($dX$) rather than time ($dt$) or a random variable (stochastic differential equation, $d\varepsilon$). Rough path theory considers solution maps for driving signals that are much rougher (highly oscillatory and potentially non-differentiable) than e.g. a linear path of time or a traditional Brownian driving path. It is more robust to consider Eq. (\ref{cde example}) over Eq. (\ref{f definition}), and replace $dS_t$ as an input with its integral. We can rewrite  Eq. (\ref{f definition}) in this form by setting $Y = \Xi$, $X_t = (t, \int_{s=0}^{t}dS_sds)$ and $g(y, (t,x))=f$ (e.g. see \citet{liao2019learning}, Eq. (2)). This allows the effect to be of a broader type and need not even be differentiable for equation (\ref{cde example}) to be well defined.
The Picard-Lindelöf theorem (\citet{lyons2007differential}, Theorem 1.3) states that if $X$ has bounded variation and $g$ is Lipschitz\footnote{For a complete definition of this regularity assumption, see \cite{lyons2007differential}, Definition 1.21. This means (1) the effect $g$ being bounded over all values $y$ in the image of $Y$. This naturally holds for drawdown of bounded variance paths. And (2) the difference in effect of two inputs is bounded by the difference in inputs, which was the main result in Section \ref{section nl ds lip reg}}, then for every initial value $y_0$ in the image of $Y$, the differential equation (\ref{cde example}) admits a unique solution. Importantly, if the effect of $X$ on $Y$ is not Lipschitz, we lose the uniqueness of the solution. As shown in Appendix \ref{appendix}, through Picard iteration and under an additional regularity assumption\footnote{For simplicitly, in analogy to \citet{lyons2014rough}, we assumed here that the iterated $g^{\circ m}$ takes their values in the space of symmetric multi-linear forms, but this generalizes to any Lipschitz continous $g$ as per Remark 1.22 in \citet{lyons2007differential}. This simply means that if $g$ is differentiable, the $g^{\circ m}$ are indeed the classical $m$-th order derivatives of $g$, in general they are only polynomial approximations of $g$ at increasing orders.} on $g$, one naturally arrives at $M$-step Taylor approximation $\hat{Y}(M)_t$ on the path space for the $Y_t$ in Eq (\ref{cde example}):
\begin{equation}\label{Taylor expansion}
        \hat{Y}(M)_t = y_0 + \sum_{m=1}^{M}g^{\circ m}(y_0)\underset{u_1<...<u_m, u_1,...,u_m \in [0,T]}{\int...\int}dX_{u_1}\otimes...\otimes dX_{u_m}
\end{equation}

The $m$-fold iterated integral on the right hand side of Eq (\ref{Taylor expansion}) is called the $m$-th order \textit{signature} of the path $X$, which we denote by $\Phi^m(X)$. 

Signatures have useful symmetries and properties, for our use case most notably \textit{universality} and \textit{uniqueness}, which we will briefly explain below. For now, we should note that equations of this form admit uniform estimates of $Y$ when $X$ is a rough path (highly oscillatory and non-differentiable), and it suffices to control the variation of the path $X$ (of bounded $p$-variation, such as finite-variance paths) and the regularity of $Y$ (\citet{lyons2007differential}, conditions\footnote{In short, the Lipschitz continuity of $Y$ as per above implies the continuity of the Ito-Lyons solution map.} for Eq. (2) to hold).

Clearly, $\hat{Y}(M)_t$ is linear in the truncated signature of $X$ up to order $M$. Moreover, the error bounds of $\hat{Y}(M)_t$ to approximate $Y_t$ display a factorial decay in terms of $M$:
\begin{equation}\label{factorial decay formula}
    |Y_t - \hat{Y}(M)_t| \leq C_{\gamma}\frac{|X|^{M+1}_{1, [0,t]}}{M!}
\end{equation} 
where the rate decay depends on the roughness of the signal\footnote{The roughness in the \textit{p-variation} sense can be understood as the maximum (supremum) amount of variation (bounded by a defined $L^p$-norm, $|X|^p_{u, [0,T]}$) a path $X$ over $[0, T]$ can have over all subdivisions $u \subseteq [0,T]$. The more complex the path, e.g. antipersistent variation, the worse the approximation becomes for a given level of $M$.}, denoted by $\gamma$ \cite{boedihardjo2015uniform}. Eq. (\ref{Taylor expansion}) and (\ref{factorial decay formula}) aptly summarize the signature's \textit{universality}, i.e. \textit{a non-linear $Y$ on $X$ can be well approximated by a linear function $L$ on $\Phi^m(X)$}.

For instance, if one replaces the path $X$ by scalar $x$ and thus non-commutative tensor products $\otimes$ become simple products, the $m$-times repeated integral on $dx$ is $\frac{x^m}{m!}$ which is the classical exponential series we know from a standard Taylor expansion. Signatures are thus the monomials of the path space (\citet{lyons2014rough}), the generalization of local approximation to functions of paths by offering a strictly non-commutative exponential function. 

\citet{chevyrev2018signature} introduce signatures as the moment generating function on the path space, as they play a similar role to normalized moments for path-valued data instead of scalar-valued data. This is the \textit{uniqueness} property: \textit{paths with identical signatures are identical in the distributional sense as are scalars with the same sequence of normalized moments. }

%\footnote{The proof is extremely concise, and that is why we summarize it here. Since $\Phi(X)$ is a tensor algebra of X, the family of all linear combinations of $\Phi(X)$ is an algebra. This can be seen from the shuffle product property, as explained in \cite{chevyrev2016primer}. Constant functions are preserved by the constant zeroth term of $\Phi(X)$. The algebra separates the points as comes naturally from \cite{lyons2007differential} Corollary 2.16. Stone-Weierstrass then yields that the linear family on $\Phi(X)$ is dense in the space of continuous functions on X.}

Proof of signature universal non-linearity is due to \citet{lemercier2021distribution} (also see \citet{lyons2022signature} for a more recent redefinition of the result in Theorem 3.4) who prove that for a Lipschitz\footnote{Again this means $h$ does not need to be $k$-times differentiable, but the $k$-th order differentials need to be bounded.} continous $h$, path of bounded variation $X$, there exist a vector of linear weights $L$ such that for any small $\epsilon$:
\begin{equation}\label{sig approx}
    |h(X) - \langle L,\Phi(X) \rangle| \leq \epsilon
\end{equation}
where $\langle x_1,x_2 \rangle$ denotes the standard inner product between vectors $x_1$ and $x_2$, and $\Phi(X)$ the infinite $M$ signature or infinite collection of iterated integrals of a path. By the Stone-Weierstrass theorem (a crucial theorem in proving the universal approximation capabilities of neural networks, \citet{cotter1990stone} and \citet{hornik1991approximation}) it is proven that signatures are a universal basis in the sense that they allow us to express non-linear path functions $h$ as a linear function of signatures $\Phi(X)$ provided that they have the required regularity. As for classical Taylor series, although a function might not be $k$-times differentiable, a high order smooth polynomial approximation can be a quantifiably close (bounded error) approximation for bounded Lipschitz functions, provided $C$ is small. In this paper, we apply this common insight to path functions, and drawdowns in particular.

\subsection{Signature approximation of drawdown}

We noted in Section \ref{section drawdowns} that drawdown is a non-linear, non-differentiable function of its underlying path $S_{\theta}$, which is smooth in the bounded differentials $d\xi/dS$ sense. Its outcome can be seen as an interpolation between two types of path dependent effects. In this section, we leverage the boundedness of Section \ref{section drawdowns} together with the universality of signatures of Section \ref{rough path theory section} to introduce a smooth local approximation by linear approximation of drawdown on path signatures.

We propose an approximation $\hat{\xi}(M)_t$ of $\xi_t$ of the form:
\begin{equation}\label{drawdown approx full}
    \hat{\xi}(M)_t = \xi_0 + \sum_{m=1}^{M}L_m\underbrace{\underset{u_1<...<u_m, u_1,...,u_m \in [0,t]}{\int...\int}dS_{u_1}\otimes...\otimes dS_{u_m}}_{\Phi^m(S)}
\end{equation}
where $L_m$ is a vector of linear coefficients linking the drawdown at $t$ with the signature terms of order $m$ of the path $S$ up to $t$. As per above, $L$ could be considered the iterated effects of intervals up to $t$ on the resulting drawdown $\xi_t$, where the coefficients are \textit{not} the $m$-th order derivatives as they are not defined, but polynomial approximations that are essentially numerical interpolations of the nested effects. Signatures thus offer a strictly non-commutative alternative to the polynomials Taylor series would suggest. 

Importantly, for this integration to make sense the path $S$ has to be continuous, hence in practice augmented from discrete observations into the continuous domain by \textit{time-augmenting} the path. This means adding time as an axis and assuming piecewise linear paths\footnote{Note that in the computation of signatures of piecewise linear paths, one can use Chen's identity \cite{lyons2014rough} to compute the signature as the iterated tensor product of the increments of the path along the time axis, which alows for efficient computations for practical discrete (but assumed piecewise linear) data (cf. practical computation of signatures, Section 5 in \cite{lyons2022signature}).}. Other options such as \textit{lead-lag} and \textit{rectilinear} augmentation exist (see \citet{lyons2022signature}), but are not favored for this application and may add dimensionality to the paths which increases the number of signature terms per level $M$ and related compute time. 

As the ordered iterated integrals represent the drift, Levy area, and higher order moments of the path distribution (see \citet{chevyrev2018signature}), Eq. (\ref{drawdown approx full}
) thus argues that \textit{drawdown can be approximated as a linear function of the moments of the path}\footnote{Already note here the parallel with the link between quantiles and traditional moments, which we will restate in Section \ref{model intro section}.}. Leveraging factorial decay of the approximation error for Lipschitz functions, we argue with Proposition (\ref{Prop 3.1}) and Eq. (\ref{f definition}) that with the full signature $M \xrightarrow{} \infty$ one gets an arbitrarily close approximation of  $\xi$, where the rate decay depends on the roughness, denoted by $\gamma$, of the underlying price process $S$ (see proofs in \citet{boedihardjo2015uniform}).

In the more compact inner product notation, we propose to apply Eq. (\ref{sig approx}) to drawdown:
\begin{equation}\label{x}
    |\Xi(S) - \langle L,\Phi(S) \rangle| \leq \mu
\end{equation}
where the arbitrary precision $\mu$ is only in theory, because in practice we rely on the truncated signatures of level $M$ as the full signature is an infinite collection, and thus there will be an error $\kappa$ that due to the ordered nature of the coefficients decays factorially in $M$ (e.g. Eq. \ref{factorial decay formula} for the intuition and \citet{boedihardjo2015uniform} for the proofs):
\begin{equation}\label{ox}
\begin{split}
    \Xi(S) = \langle \hat{L},\Phi^M(S)\rangle + \kappa_M
\end{split}
\end{equation}
\begin{equation}
    \hat{\Xi}_M(S) = \langle \hat{L},\Phi^M(S)\rangle
\end{equation}
where $\hat{L}$ are the estimated coefficients for a chosen signature truncation level $M$, contrasting the theoretical infinite collection of weights $L$. Similarly, $\hat{\Xi}_M(S)$ is the approximated drawdown for this truncation level $M$, while $\Xi(S)$ is the exact value. Note that one could also do an equivalent truncation of the number of linear coefficients $len(\hat{L})$ rather than the signature order. However, from Eq. (\ref{drawdown approx full}) we know that it is more natural to choose a set of linear coefficients that corresponds to a number of signature terms following the choice of M.

Proposition 4.1 looks into the consistency behaviour of $\kappa$ with respect to a sample size of $K$ sample paths drawn from $\mathcal{V}$ and next we highlight small sample properties that become apparent from the proof.

\begin{proposition}[\textbf{Consistency of linear $\mathbf{\Xi(S)}$ approximation on signatures $\mathbf{\Phi^M(S)}$}]\label{Prop 4.1} 
Consider by $\mathcal{V}([0,T], \mathbb{R})$ the space of continuous paths of bounded variation $[0,T] \rightarrow \mathbb{R}$, $\mathcal{K} \subset \mathcal{V}$ is a compact subset comprising sample paths $S_k$, $k \in [0, ..., K]$, and $\Xi : \mathcal{V} \rightarrow \mathbb{R}$ is the Lip continuous drawdown function.
The approximation error $\kappa$ of $\Xi(S)$ for \textit{any} $S$ in $\mathcal{V}$ by $\hat{\Xi}(S)$ is bounded through the regularity of $\Xi$ (Proposition 3.1) and the distance in the signature space between $S$ and any $S_k$, such that for $K \rightarrow \infty$, $\kappa \rightarrow 0$, or
\begin{equation}
    |\Xi(S) - \hat{\Xi}(S)| \rightarrow 0, \textnormal{ for }  K \rightarrow \infty
\end{equation}
\end{proposition}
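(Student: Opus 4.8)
The plan is to decompose the total error into an in-sample \emph{approximation} (truncation) part and an out-of-sample \emph{estimation} (generalization) part, and to control each separately using the two ingredients already established: the $Lip$-regularity of $\Xi$ (Proposition \ref{Prop 3.1}) and the universal linearity of signatures (Eq. \ref{sig approx}). The geometric engine is the compactness of $\mathcal{V}$: drawing $K$ sample paths with full support makes $\{S_k\}$ dense in $\mathcal{K}$ as $K\to\infty$, so every target path $S$ admits a sample path arbitrarily close to it in inf-norm. I would then transfer the in-sample fit to $S$ by exploiting that \emph{both} $\Xi$ and the fitted functional $\hat{\Xi}=\langle\hat{L},\Phi^M(\cdot)\rangle$ are continuous on this compact set.

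Concretely, I would first fix a truncation level $M$ and invoke Eq. (\ref{sig approx}) together with the factorial decay (\ref{factorial decay formula}): since $\Xi$ is Lipschitz, there exist population-optimal coefficients $L^{*}_{M}$ for which the truncation error $\kappa_{M}=\sup_{S\in\mathcal{K}}|\Xi(S)-\langle L^{*}_{M},\Phi^{M}(S)\rangle|$ is finite and satisfies $\kappa_{M}\to 0$ as $M\to\infty$. Next, reading $\hat{L}$ as the least-squares solution on the design $\{(\Phi^{M}(S_k),\Xi(S_k))\}_{k=1}^{K}$, I would argue by a law-of-large-numbers / uniform-convergence argument that the empirical Gram matrix and cross-moments converge to their population analogues on the compact feature set, so $\hat{L}\to L^{*}_{M}$ and in particular $\|\hat{L}\|$ stays bounded. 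Because signature norms are uniformly bounded on $\mathcal{K}$ (again by the factorial decay of the iterated integrals), the fitted functional $\hat{\Xi}$ is then Lipschitz in the path inf-norm: composing the local Lipschitz continuity of the signature map $\Phi^{M}$ with the boundedness of $\hat{L}$ yields a constant $C'$ with $|\hat{\Xi}(S)-\hat{\Xi}(S_k)|\le C'\|S-S_k\|_{\infty}$.

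The two pieces combine through a single triangle inequality. For a target $S$ and its nearest sample path $S_k$,
\begin{equation}
|\Xi(S)-\hat{\Xi}(S)| \le |\Xi(S)-\Xi(S_k)| + |\Xi(S_k)-\hat{\Xi}(S_k)| + |\hat{\Xi}(S_k)-\hat{\Xi}(S)|,
\end{equation}
where the first term is $\le C\|S-S_k\|_{\infty}$ by Proposition \ref{Prop 3.1}, the third is $\le C'\|S-S_k\|_{\infty}$ by the step above, and the middle in-sample term is controlled by $\kappa_M$ plus the estimation gap $\|L^{*}_{M}-\hat{L}\|\,\sup_{S\in\mathcal{K}}\|\Phi^{M}(S)\|$. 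Density of the samples drives $\|S-S_k\|_{\infty}\to 0$ and regression consistency drives the estimation gap to $0$, both as $K\to\infty$; the residual truncation floor $\kappa_M$ is then removed by letting $M\to\infty$, which is the content of Eq. (\ref{factorial decay formula}).

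The step I expect to be the genuine obstacle is the \emph{joint} control of $M$ and $K$. For fixed $M$ the argument only yields convergence to the truncation floor $\kappa_M$, so one must let $M=M(K)\to\infty$; but then the regression dimension grows with the sample, and the uniform convergence $\hat{L}\to L^{*}_{M}$ must be shown to persist while the feature dimension expands. This is where the factorial decay is indispensable, since it guarantees the tail signature coordinates contribute negligibly and keeps the effective problem finite-dimensional. A secondary delicate point is the density claim itself: it presupposes that the sampling law has full support on $\mathcal{V}$ and that the Gram matrix of signatures is (asymptotically) non-degenerate, so that $\hat{L}$ is well defined and uniformly bounded rather than blowing up through an ill-conditioned least-squares inverse.
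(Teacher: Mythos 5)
Your proof follows essentially the same route as the paper's: the identical three-term triangle inequality around a nearest sample path $S_k$, with the first term controlled by Proposition \ref{Prop 3.1}, the middle term by signature universality and factorial decay, and the third term by Lipschitz continuity of the fitted linear functional in the signature coordinates, all glued together by density of $\{S_k\}$ in the compact set as $K\to\infty$. The one substantive place you go beyond the paper is the middle term: the paper simply identifies $\hat{L}$ with the coefficients whose existence is guaranteed by the universal nonlinearity theorem and bounds $\textbf{B}\le\iota$, whereas you distinguish the population-optimal $L^{*}_{M}$ from the regression estimate $\hat{L}$ and insert an explicit estimation gap $\|\hat{L}-L^{*}_{M}\|\sup_{S\in\mathcal{K}}\|\Phi^{M}(S)\|$ requiring a law-of-large-numbers argument and a non-degenerate Gram matrix. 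The two difficulties you flag --- the joint $M(K)\to\infty$ limit with growing feature dimension, and the full-support/density assumption on the sampling law --- are real and are also present (but left implicit) in the paper's proof, which disposes of them with the remark that ``$M$ can be set arbitrarily large for $K\to\infty$'' and an appeal to compactness. So your version is the same argument carried out with somewhat more statistical honesty; nothing in it contradicts the paper's reasoning, and the gaps you identify are gaps in the original as well.
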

\begin{proof}
    Universal nonlinearity of signatures (Eq. \ref{sig approx}) is due to Theorem 2.1 in \citet{lemercier2021distribution} and states that for defined $\mathcal{V}$ and $\mathcal{K}$ there exists a truncation level $M \in \mathbb{N}$ and coefficients $\hat{L}$ such that for every $S_k \in \mathcal{K}$ we have that for any $\iota$
    \begin{equation}\label{lemercier}
        |\Xi(S_k) - \langle \hat{L}, \Phi^M(S_k) \rangle | \leq \iota
    \end{equation}
    We decompose $\kappa = |\Xi(S) - \hat{\Xi}(S)|$ with a triangle inequality as suggested by Eq. (3.6) in \citet{lyons2022signature}:
    \begin{align}
        & |\Xi(S) - \hat{\Xi}(S)| \leq |\Xi(S) - \Xi(S_k)| + \\
        & |\Xi(S_k) - \hat{\Xi}(S_k)| + |\hat{\Xi}(S_k) - \hat{\Xi}(S)|  \\
        & = \textbf{A}_K + \textbf{B}_{M(K)} + \textbf{C}_K 
    \end{align}
    This inequality bounds the error by the regularity of $\Xi$ in $\textbf{A}$, the $\iota$ of Eq. (\ref{lemercier}) in $\textbf{B}$ and a signature distance in $\textbf{C}$.

By Proposition 3.1, we find that $\textbf{A} \leq C||S-S_k||_{\infty}$. For any $k \in [1, ..., K]$ with $K \rightarrow \infty$ and the compactness of $\mathcal{V}$ there exists a $k$ such that $||S - S_k||_{\infty} \rightarrow 0$ such that $\textbf{A}$ can be reduced to zero.
By Eq. (\ref{lemercier}), $\textbf{B} \leq \iota$, such that this term is governed by the rate decay of Eq. (\ref{factorial decay formula}). Eq. (\ref{lemercier}) guarantees we can pick an $M$ high enough such that through Eq. (\ref{factorial decay formula}), $\iota \leq C_{\gamma}\frac{|S|^{M+1}_{1, [0,T]}}{M!}$ such that this term can be shrunk arbitrarily small as $M$ can be set arbitrarily large for $K \rightarrow \infty$.

As per Eq. (3.7) in \citet{lyons2022signature} the difference between the approximated drawdown of two paths can be bounded by a linear combination of the difference in signatures:
\begin{equation}
    \textbf{C} \leq |\langle L, |\Phi^M(S_k) - \Phi^M(S)| \rangle|
\end{equation}
for an appropriate choice of rough path distance metric \cite{lyons2022signature}\footnote{More specifically, an $L^p$-norm in the path space that bounds the $p$-variation of these differences (\citet{lyons2007differential}).}. For any $k \in [1, ..., K]$ with $K \rightarrow \infty$ and the compactness of $\mathcal{V}$ there again exists a $k$ such that $|\Phi^M(S_k) - \Phi^M(S)| \rightarrow 0$ such that $\textbf{C}$ could be reduced to zero, which concludes the proof.
\end{proof}

In sum, $\kappa$ converges to zero for $K \rightarrow \infty$ provided that Proposition (3.1) holds for $\textbf{A}$, while compactness of $\mathcal{V}$ ($\textbf{B}$ and $\textbf{C}$) is a reasonable assumption in practice.

For finite $K$, we can still use the decomposition in the proof to do error analysis. The drawdown approximation will generalize well as far as (A) the maximum distance between available samples $S_k$ and possible new samples in $\mathcal{V}$ is small, (B) the signature truncation level $M$ is chosen appropriately high for the given roughness of the process to let $\iota$ be small enough, (C) as per (A) the distance between the signatures of observed and unobserved paths is small such that term $\mathbf{C}$ is small.

Finally, a remaining modelling choice is the specific choice of $L$. As the approximation is essentially linear, Eq. (\ref{ox}) can be estimated using linear regression (OLS) and higher order polynomials are not required (as one would need to do with e.g. \textit{logsignature} regression \cite{lyons2022signature}). However, since in practice the sample $K$ is limited and the number of signature terms scales exponentially with $M$, one has to be mindful of overfitting on a limited sample size $K$, e.g. $len(\Phi^M) > K$. Therefore, we study the impact of regularized linear regression of  $\Xi(S)$ on $\Phi^M(S)$ with a penalty for the number (absolute shrinkage or selection) and size (proportional shrinkage) of estimated coefficients (i.e. the elastic net (ElNet) regression). We conclude by specifying:
\begin{equation}\label{approx final specification}
    \hat{L} = \min_{L}(||\Xi(S)-\langle L, \Phi^M(S) \rangle||_2 + \lambda_1||L||_1 + \lambda_2||L||_2)
\end{equation}
\begin{equation}
    \hat{\Xi}_M(S) = \langle\hat{L},\Phi^M(S)\rangle
\end{equation}
where $\lambda_1=\lambda_2=0$ corresponds to OLS, $\lambda_1=0$ corresponds to Ridge and $\lambda_2=0$ to LASSO regression \cite{hastie2009elements}. In applications below, we set $\lambda_1$ and $\lambda_2$ using 10-fold cross-validation (CV), such that we can further refer to Eq. (\ref{approx final specification}) as \textit{ElNetCV}.

\section{The model}\label{market generator}

\subsection{Introduction}\label{model intro section}

In this section, we introduce and motivate our generative ML model. The aim of the drawdown market generator, from here on named the $\xi$-VAE, is that upon convergence of (train and validation) reconstruction loss terms we have guarantees that the synthetic samples have preserved the drawdown distribution of the original samples. This is generally not the case in market generator models or financial DGPs with standard martingale assumptions (as per Section \ref{section drawdowns}). 

Crafting a DGP to obtain a certain level of drift, volatility or higher order moments is mathematically more straightforward, as those are typically the equations that constitute the DGP. With measures on the P\&L, such as value-at-risk (VaR) or expected shortfall (ES) (\citet{cont2022tail}), one can leverage the direct analytical link between quantiles and moments (e.g. \citet{boudt2008estimation}). As per Section \ref{section drawdowns}, one can express the drawdown distribution as a function of the moments of the static P\&L described by the DGP as well, but only under very restrictive assumptions (e.g. \citet{douady2000probability} and \citet{rej2018you}). Besides, one would expect drawdown to be a function of the moments of the path (vector-valued) $S$, rather than the (scalar-valued) $S_t$. 

 In this article, we proposed to \textit{approximate drawdown as a linear combination of the moments of the path}, as \citet{chevyrev2018signature}) define signatures as the moments of the path. As an analogue to quantiles and static moments, we can evaluate these path moments and weigh them according to their importance to simulate realistic drawdowns. Moreover, this approximation implies a smoothing of the drawdown function by the change of basis. The signature is a non-parametric sum of path values. Weighted sums are differentiable\footnote{That is why \citet{kidger2020signatory} focus on this differentiability property for efficient CPU and GPU implementations in their \textit{signatory} (\url{https://pypi.org/project/signatory/}) package, which we use in our Python code.}. Indeed, because of linearity the loadings of drawdown to signature terms can also be seen as the sensitivities of a path's drawdown to changing signatures. 
 
Measuring the divergence between the moments of a path by means of a maximum mean discrepancy (MMD) was proposed by \citet{buehler2020data}. We essentially add that it is useful to weigh these moments according to the $L$ from Eq. (\ref{approx final specification}) to minimize (and control for) the drawdown divergence between the input and the output samples during training (and validation) epochs. This has the advantages of (1) not requiring signatures in the input or output space, only as part of the objective, and (2) having an explicit drawdown term in the reconstruction loss that allows one to monitor its convergence during training and validation. The next subsection will make this motivation specific and introduce the algorithm.

\subsection{The algorithm}

\paragraph{\textbf{Variational autoencoders}} The core of our algorithm is a variational autoencoder (VAE), which is a general generative machine learning architecture that links an encoder and decoder neural network in order to generate new samples from a noisy, in this case Gaussian, latent space. The idea of a Monte Carlo is to transform noise into samples that are indistinguishable from the actual samples by scaling them, adding drift, etc. In other words, the neural network that constitutes the decoder is our non-parametric DGP. It does contain the parameters of the neural network, but it is non-parametric in the sense that we do not have to specify the dynamics in a handcrafted formula before we can do Monte Carlo. We rather rely on the universal approximation theorems behind even shallow neural networks (\citet{hornik1991approximation} and \citet{cotter1990stone}) to approximate a realistic DGP by iterating data through the network and updating the parameters $\theta$ with feedback on the drawdown distribution of the batch, assuring that the approximated DGP converges in train and validation loss to the empirical DGP.

We will not discuss the VAE architecture in depth here, but include an architectural overview in Appendix \ref{Appendix 2: variational autoencoder} and refer the interested reader to \citet{kingma2014stochastic} for details on the encoder and decoder networks $g$, backpropagation, the latent Kullback-Leibler $\mathcal{L}_L$ loss and the standard $L2$ reconstruction loss $\mathcal{L}_R$. 

We should stress here that the main reason for picking a rather standard VAE (over restricted Boltzmann machines, generative adversarial networks, generative moment matching networks or normalizing flow-based VAEs) is their  simplicity, speed, flexibility, scalability and stability during training. Boltzmann machines are very efficient to train, but the energy-based loss function and their binary values makes them very inflexible for adjusting objective functions. Through the discriminator mechanism GANs are most flexible and a very popular choice in related literature, but notoriously expensive to train in terms of required data and speed, and associated instabilities such as \textit{mode collapse} and \textit{vanishing gradients} leading to subpar results (\citet{eckerli2021generative}).

\begin{algorithm}[h!]
\scriptsize
\caption{Training $\xi$-VAE}\label{euclid1}
\label{"Algorithm 1"}
\hspace*{\algorithmicindent} \textbf{Input} \multiline{Historical price paths $S: [0,T] \rightarrow \mathbb{R}$, hyperparameters (listed in Appendix \ref{Appendix 2: variational autoencoder}), signature truncation level M and feature weight $\alpha$.} \\
\hspace*{\algorithmicindent} \textbf{Output}{ Trained VAE Market Generator $g_{\theta}$} \\ 
\begin{algorithmic}[1]
\Procedure{Train}{}
\State \multiline{
Divide historical sample into blocks (index $b$, $b \in [1, T-\tau]$) of length $\tau$, calculate the signatures of these paths truncated at level $M$, $\Phi^M(S_b)$, calculate the drawdowns $\Xi$ of these paths $\Xi(S_b)_{\tau} = \int_{0}^{\tau}(max_{t_i < t}(S_{b,{t_i}}) - S_{b}, t)dt$
}

\State\label{Step 3} {$\hat{L} \xleftarrow{ }ElNetCV({\Xi}(S_{b}), \Phi^M(S_b)$}
\State {Initialize the parameters $\theta$ of the VAE.}
\For {$i: \{1,...,N_t\}$}:
\State \multiline{
Sample a batch (index $\mathcal{B}$) of blocks and pass it through the encoder $g_{\theta}$ and decoder network $g^{-1}_{\theta}$.}
\State\multiline{Calculate drawdown $\Xi(S')$ of the output sample $S'$ using the differentiable signature approximation: $\langle \hat{L}, \Phi^M(S') \rangle$ }
\State\label{Step 8}\multiline{Define the reconstruction loss term as the weighted average of $L2$ error and drawdown loss:
$\mathcal{L_R} = \mathbb{E}_{\mathcal{B}}||S-S'||^2 + \alpha\mathbb{E}_{\mathcal{B}}||\langle \hat{L}, \Phi^M(S) \rangle - \langle \hat{L}, \Phi^M(S') \rangle||^2$}.
\State{$\mathcal{L} = \mathcal{L}_L + \mathcal{L}_R $}
\State\label{Step 10}{$\theta_i \xleftarrow{} \theta_{i-1} - l \frac{\delta\mathcal{L}(\theta)}{\delta\theta}$}
\EndFor
\EndProcedure
\end{algorithmic}
\end{algorithm}
\normalsize

\begin{algorithm}[h!]
\scriptsize
\caption{Sampling from $\xi$-VAE}\label{euclid2}
\label{"Algorithm 2"}
\hspace*{\algorithmicindent} \textbf{Input} Trained VAE Market Generator $g_{\theta}$. \\
\hspace*{\algorithmicindent} \textbf{Output} $N_g$ generated samples $S'$  \\ 
\begin{algorithmic}[1]
\Procedure{Generate}{}
\For{$j: \{1, ..., N_g\}$}
\State{Sample a random Gaussian variable Z}
\State{$S' \xleftarrow{} g^{-1}_{\theta}(Z)$}
\EndFor
\EndProcedure
\end{algorithmic}
\end{algorithm}
\normalsize

\paragraph{\textbf{Training and sampling from $\mathbf{\xi}$-VAE}} The proposed algorithm is provided in Algorithm \ref{"Algorithm 1"} and \ref{"Algorithm 2"}. In short, we propose to include the divergence between the observed drawdown distribution of a batch $\mathcal{B}$, which is a stochastically sampled set of blocks of $\tau$ subsequent points of $S$, and the synthetic drawdown distribution (the drawdowns of reconstructed samples $S'$) in the reconstruction loss function. The market generator can be interpreted as a moment matching network, adding the moments of drawdown rather than returns\footnote{For instance, this distributional distance over a batch is identical to the distance between the actual and fitted distributions in Figure \ref{Drawdown distribution}, which is the drawdown distribution we want to preserve in synthetic samples.}. 

As input $\xi$-VAE takes historical price paths $S$, a signature truncation level $M$, objective scale $\alpha$\footnote{Through Grid Search, $1e^{-4}$ was chosen for $\xi$-VAE, while zero corresponds to a standard $VAE$.}, and the VAE hyperparameters listed in Appendix \ref{Appendix 2: variational autoencoder}. The output is a trained neural DGP (encoder and decoder network) that allows to transform random Gaussian variables into new paths indistinguishable\footnote{With regard to the train and test convergence criteria $\mathcal{L}_\mathcal{R}$ ($\xi$-VAE and VAE) and $\mathcal{L}_\mathcal{\xi}$ (only $\xi$-VAE.).} from original data.

The key steps for training a $\xi$-VAE, distinctive from a standard VAE (cf. Appendix \ref{Appendix 2: variational autoencoder}), are:
\begin{itemize}
    \item \textbf{Signature drawdown approximation} (Step (\ref{Step 3})): compute the signatures up to order $M$ of each block, $\Phi^M(S_b)$, and regress them on the drawdown of the path, $\Xi(S_b)$ , using Eq. (\ref{approx final specification}). This results in one set of weights $\hat{L}$ that are used over the $N_t$ training steps. This approximation overcomes the numerical complexity issue inherent to naive iterative (closed-form $\Xi$) evaluation of drawdown.
    \item \textbf{Drawdown divergence evaluation} (Step (\ref{Step 8})): drawdown divergence over a particular batch of data is defined as the distance between the original and replicated drawdown distribution: $\mathcal{L_\xi}=\mathbb{E}_\mathcal{B}||\langle \hat{L}, \Phi^M(S) \rangle - \langle \hat{L}, \Phi^M(S') \rangle||^2$. The approximation overcomes the issue that closed-form $\Xi$ evaluations cannot yield informative gradients in Step (\ref{Step 10}) and $\mathcal{L}_{\xi}$ would be ignored (flat) in training.
\end{itemize}

While standard ($\alpha=0.0$) $\mathcal{L}_{R}$ terms converge sooner, the additional $\mathcal{L}_\xi$ term shifts the distributions towards a $\mathcal{P}(\xi)$ close to the empirical one. As per below, this can be seen by monitoring the $\mathcal{L_\xi}$ term during training and validation epochs.

\section{Numerical results}\label{numerical experiments}

This section comprises the numerical results of the outlined methods. First, we discuss the accuracy of linear approximation of drawdown in the signature space on simulated and real world data, focusing on the error rate decay and consistency. Second, we discuss the accuracy of the drawdown market generator.

\subsection{Linear drawdown approximation with signatures}

This part analyses the accuracy of the approximation. Section \ref{bottom up} describes the simulation set up and results for fractional Brownian simulated data. Section \ref{real world approx} discusses the accuracy of the approximation on empirical data.

\subsubsection{Bottom-up simulations}\label{bottom up}

\paragraph{\textbf{Simulation set up}}
We expect to find uniform decay as a function of truncation level $M$ of estimated drawdown approximation error $\hat{\kappa}$ in-sample, where the decay constant is a function of the roughness of the underlying process. Out-of-sample we rely on the error decomposition in Proposition \ref{Prop 4.1} and expect the error to shrink if $K$ grows very large. Since the roughness of empirical data is unknown and has to be estimated, it is useful to test the  approximation in an experimental set up where $\gamma$ can be specified. We thus first test the approximation (\ref{approx final specification}) on simulated fractional Brownian motion (fBM) paths.

Consider first the simplest case of homoskedastic BM ($dS_t = \mu dt + \sigma d\varepsilon$). In this simple case, the price path $S$ can thus be seen as the cumulative sum process of a random uncorrelated standard Gaussian $\varepsilon$, scaled with $\sigma$ and added a deterministic drift $\mu$. We consider piecewise linear paths of length $T=20$ days with values $\mu=1\%/252$ and $\sigma=20\%/252$. fBM implements BM where the uncorrelated Gaussian increments are replaced by fractional Gaussian increments that have a long-memory structure. The martingale property that the autocovariance between Gaussian increments has expectation zero, is replaced by a generalized autocovariance function for two increments $dS^H$ at t and s (i.e. lag $t-s$):
\begin{equation}
    E[dS^H_t dS^H_s]=\tfrac{1}{2}[|t|^{2H}+|s|^{2H}-|t-s|^{2H}]
\end{equation}
where H is the so-called Hurst exponent ($H$). Note that a $H=0.5$ corresponds to Brownian increments, while $H>0.5$ yields smooth, persistent, positively autocorrelated paths and $H<0.5$ yields rough, antipersistent negatively autocorrelated paths. Intuition tells us that the smaller $H$, the more granularity the path has, and the worse the approximation will become for a certain level of $M$.

There are hence three dimensions to this simulation study. We want to evaluate the estimated error $\hat{\kappa}$ as a function of (1) roughness $H$, (2) signature approximation order $M$ and (3) simulation size $K$. Therefore, we:

\begin{itemize}
    \item Vary $H$ between 0.4 and 0.7 with step size 0.05 
    \item Vary $M$ between 1 and 10 (naturally with unit steps)
    \item Repeat the experiment for $K$ in [1000, 5000, 10000, 20000, 50000]
    \item Fit regression Eq. (\ref{approx final specification})\footnote{As per above we assume piecewise linear paths by time-augmenting \cite{lyons2022signature} the paths and do standard scaling of the feature set such that the coefficients and regularization penalties make sense. The chosen $\lambda_1$ and $\lambda_2$ by $CV$ is sample specific, but was stable around $4e^{-5}$ regularization (i.e. the scale of both lambdas w.r.t. total objective value) and a $\lambda_1/\lambda_2$ ratio of $0.5$.} on $K$ samples (train) and simulate $p_{test}K$ new samples to evaluate test accuracy ($p_{test}=0.1$).
\end{itemize}

\paragraph{\textbf{Results and discussion}}

 \begin{figure*}
     \centering
     \begin{subfigure}{\textwidth}
         \centering
         \includegraphics[width=0.75\textwidth]{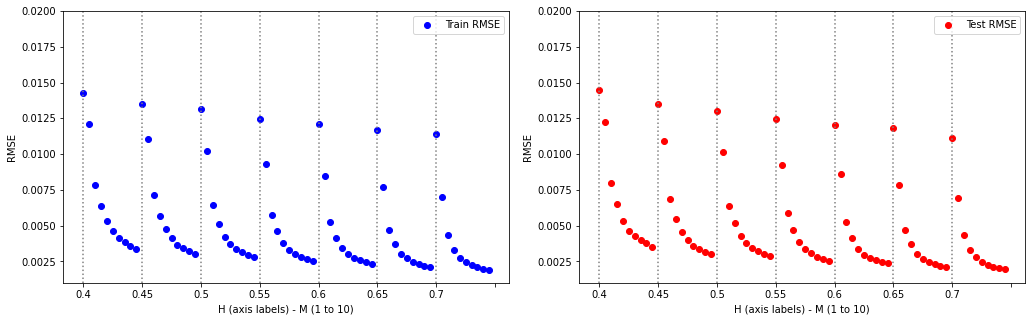}
     \end{subfigure}
     \hfill
    \quad
     \hfill
        \caption{Factorial error (RMSE) decay as a function of H}
        \label{Factorial error (RMSE) decay as a function of H}
\end{figure*}

For readibility, the detailed numerical results figures are included in Appendix \ref{Appendix 3: drawdown market generator results}, where Figures \ref{is oos discrepancy} to \ref{Train test R2 roughness} investigate the root-mean-squared ($RMSE$) approximation error of the approximation for these individual dimensions. Figure \ref{Factorial error (RMSE) decay as a function of H} is included in the main text, as it gives a good overview of the expected rate decay as a function of $H$ and $M$ and the consistency for large sample sizes $K$. 

We are initially interested in the consistency of the regressions and evaluate the in- and out-of-sample discrepancy as a function of sample size. Figure \ref{is oos discrepancy} shows the difference between in- and out-of-sample accuracy as a function of sample size $log(K)$. We averaged the difference for all simulation sizes over the values of $M$ to get a line per value of $H$ (blue line), the opposite (red line) or the overall average per $K$ (black line). 

The conclusions are twofold: (1) when $K$ grows large the discrepancy disappears, which relates to Proposition \ref{Prop 4.1} in the sense that for $K \rightarrow \infty$ the two error components (terms $\textbf{A}$ and $\textbf{C}$) that explain differences in in- and out-of-sample performance shrink to zero, and (2) the in-sample(IS)/out-of-sample(OOS) accuracy divergence depends on confounders $M$ and $H$ implying an improperly chosen $M$ for smaller and/or rougher samples might lead to bad generalization (term $\textbf{B}$ in error decomposition).
 
 Next, we look in Figure \ref{Train test RMSE order} at the train and test $RMSE$ as a function of the signature approximation order $M$. When taking the average performance over $K$ and $H$, we find uniform decay in $RMSE$. When deconfounding for sample size $K$, which you can find on the right hand side of Figure \ref{Train test RMSE order}, we find that only for the smallest sample size the improvement in test accuracy stalls before $M$ reaches its maximum value. This implies that for large samples sizes one would prefer to choose the highest signature order that is possible given computational constraints, i.e. the computing time scales quadratically with the signature order $M$ while the improvement in accuracy has diminishing returns (also see Figure \ref{Average compute time}). In sum, this indicates worse generalization abilities for smaller samples driven by the distances discussed in Proposition \ref{Prop 4.1}, rather than the $M/K$ ratio as it is small for all experiments. 
 
Next, we check the relationship between the accuracy of the approximation and the roughness $H$ of the assumed process. In Figure \ref{Train test R2 roughness}, which shows the average $RMSE$ per level of $H$, we find that the accuracies improve uniformly with $H$. 

Finally, the factorial error decay in $M$ and the dependency of its rate on $H$ is best illustrated in Figure \ref{Factorial error (RMSE) decay as a function of H}. The vertical lines separate the different levels of $H$, denoted on the x-axis, while in between $M$ increases from 1 to 10, while $K$ is fixed at the maximum $50000$. It is clear that for rougher processes the error is higher and decreases slower than for smoother processes, which we derived from Eq. (\ref{factorial decay formula}). For all levels of $H$, high orders of $M$ yield negligible approximation errors for this high $K$. Moreover, the in-sample decay generalizes well to out-of-sample error behaviour. 

 In summary, the simulation study confirms our initial expectations of rate decay and IS-OOS consistency, but raises some practical warnings as well:
 \begin{itemize}
     \item For a large number $K$ of sample paths, say $log(K)>3$, one finds uniform decay in $\kappa$ with both $H$ and $M$, in both test and training fits. In terms of Proposition \ref{Prop 4.1}, the distances between any $S$ and the $S_k$ do become smaller with $K$ and the estimated $\hat{L}$ generalize better. 
     \item For small samples ($log(K)<3$), the unbiased approximation may generalize badly, which will result in worse accuracies for $ElNetCV$. This relates to Proposition \ref{Prop 4.1} in the sense that the distance between any $S$ and $S_k$ (in inf-norm and signature terms) can be large. In any case, $M$ needs to be chosen high enough for rough paths, but with small $K$ the system might become ill-conditioned or even degenerate. High regularization will then result from CV, which comes at the cost of higher bias and lower (even IS) accuracy\footnote{The accuracy that is good enough for the application at hand of course depends on the application (e.g. for a market generator with 10\% drawdowns, improvements of some bps by the higher order $M$ might not be worth the extra computational time). Moreover, below we will deal with data sets later that are considered large (in the order of 8000 samples) from these standards.}. This issue is very dependent on the sample and its roughness, but we generally discourage the approximation for sample sizes $log(K)<3$.
 \end{itemize}

\subsubsection{Empirical data}\label{real world approx}

\paragraph{\textbf{Data description and set up}} Consider a universe of $U{=}4$ investible instruments: equity (S\&P500), fixed income (US Treasuries), commodities (GSCI index) and real estate (FTSE NAREIT index). 
We collect price data (adjusted close prices) between Jan 1989 and May 2022, which gives us T=8449 daily observations. Clearly, these 4 different asset classes have different return, volatility and drawdown characteristics, which argues for combination and diversification. This can clearly be seen from Figure \ref{fig: data overview}.

As an investor, we hold a portfolio $\mathbf{w}$, $w_i, i \in {1, ..., U}$, which allocates a weight  $w_i$ to each investible asset. In these experiments, we attach $P \in \mathbb{N}$ sets of weights to these assets and pick a $\tau < T$ such that we have $T-\tau$ overlapping sample paths or a simple buy-and-hold strategy over $\tau$ days for every $p \in [0,...,P]$. Here we pick $\tau = 20$, so we model monthly sample paths of these mixed asset class portfolios.
    
    % \begin{figure*}
    %      \centering
    %      \begin{subfigure}{\textwidth}
    %          \centering
    %          \includegraphics[width=0.6\textwidth]{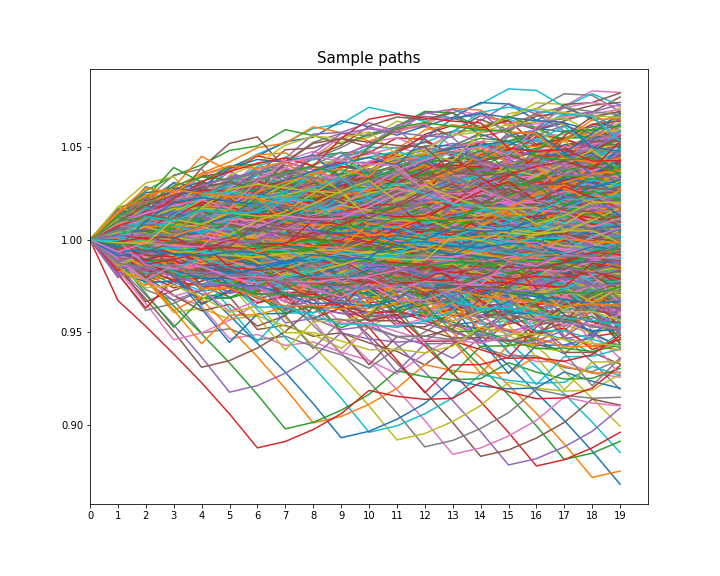}
    %      \end{subfigure}
    %      \hfill
    %         \caption{Sample paths}
    %         \label{fig: sample paths}
    % \end{figure*}

The drawdown distribution of the sample paths of such a hypothetical portfolio is shown in Figure \ref{Drawdown distribution}. If the investor finds scenarios that breach certain levels of drawdowns over the course of a particular month, she needs to reallocate and increase her weights in lower-drawdown instruments. For instance, through drawdown optimization \cite{chekhlov2005drawdown}), rescaling all portfolio weights with a risk-free cash-like fraction $w_i(1-w_{cash})$ that exhibits no drawdowns (CPPI \cite{black1992theory} or TIPP  \cite{estep1988tipp} strategy), or by buying drawdown insurance instruments such as a barrier option \cite{carr2011maximum}\cite{atteson2022carr}. However, traditional (G)BM-implied distributions understate these probabilities. In Figure \ref{Drawdown distribution}, we compare the distribution with the distribution that is implied by standard simulated Brownian motion with the same volatility $\sigma$ parameter as the portfolios we have constructed. Note that one could also use the closed forms from \citet{douady2000probability}, \citet{rej2018you} and alike here. Importantly, we notice that the blue density is merely an optimistic lower bound on the true distribution of portfolio drawdown, and misses out on the tails. This can also be seen in Figure \ref{Tail drawdown distribution}. The direct analytical link between generalized stochastic processes and this distribution is far from trivial, as one can not rely on Lévy's theorem anymore (in contrast to the theoretical (G)BM case).  In the next section, our methodology thus advocates a non-parametric approach to simulate paths that closely resemble the correct drawdown distribution. We will adapt parameterized paths $S_{\theta}$ such that their drawdown distribution converges towards this empirical drawdown distribution.

In this simulation we will:
\begin{itemize}
    \item Generate a set of portfolio weights $\mathbf{w}_p$ (index $p$ for portfolio) and construct $T-\tau$ paths, based on taking blocks $b$ from the product of $w_p$ and $S$'s cumulative returns, the univariate portfolio paths $S_p$.
    \item Calculate the drawdown $\Xi(S_p^b)$ and signatures $\Phi^M(S_p^b)$ for each block and regress the two (for increasing $M$ in [1, ..., 10]), with defining the $(1-p_{test})(T-\tau)$ first blocks as train sample and remaining blocks as test sample. Note that there is a strict train-test separation in time.
    \item Repeat $P = 100$ times and report average performance.
\end{itemize}

\paragraph{\textbf{Results and discussion}}
Our conclusions are analogous to the bottom-up simulation experiments. The accuracies are shown in Table \ref{elnet table 2}. We find consistent train and test $RMSE$ performance, which is expected given the large $K$. The average discrepancy in train-test RMSE is $0.00016041$, which is in line with Figure \ref{is oos discrepancy}.  In line with Figure \ref{Train test RMSE order}, from $M>5$ we get accurate approximations of $\xi$, after which it further improves at a much slower rate. 

The average and standard deviation of the computation time of one signature of a certain high order $M$ taken over $1000$ iterations are displayed in Figure \ref{Average compute time}. One finds diminishing improvements in $RMSE$ for exponential increase in compute time, which might be prohibitive for very large $K$. This will depend on the used hardware (the \textit{signatory} package has built-in parallelization in its C++ backend) and own parallelization choices. For this sample the improvements in accuracy become negligible after $M>5$, so it is not worth it to include higher orders of $M$. 

This can also be seen from Figures \ref{Train fit empirical data} and \ref{Test fit empirical data} which show the train and test fit respectively for one set of portfolio weights (equal weighted). It becomes apparent that signatures linearize the relationship between portfolio paths and their drawdowns well and that this applies to both smaller and higher (tail) drawdowns. From these values we can deduce that improving accuracy in the order of a few basis points (bps) does not justify the exponential increase in compute. Moreover, from a market generator application perspective (next section) the replication of a drawdown up to a few bps in the objective function is likely to be spurious precision compared to total objective value.

\begin{figure}[!h]
     \centering
         \includegraphics[width=0.40\textwidth]{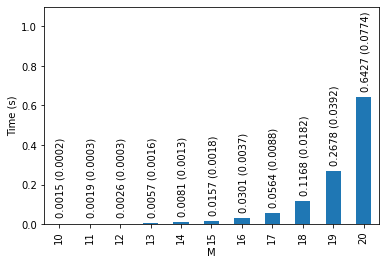}
     \hfill
        \caption{Average and standard deviation of compute times (in seconds) of a signature of order $M$ taken over 1000 iterations}
        \label{Average compute time}
\end{figure}

\begin{table}[]
\tiny
\begin{center}
\begin{tabular}{lllll}
\textbf{M (K=8429, $P$ = 100) }& \textbf{RMSE Train}        &       & \textbf{RMSE Test}        &        \\
\hline
1                   & 0.010393  & (0.002399) & 0.013613 & (0.003181) \\
2                   & 0.009492  & (0.002154) & 0.012383 & (0.002879) \\
3                   & 0.008024  & (0.001817) & 0.009181 & (0.002153) \\
4                   & 0.007005  & (0.001511) & 0.006439 & (0.001468) \\
5                   & 0.006858  & (0.001501) & 0.006240 & (0.001482)  \\
6                   & 0.006723  & (0.001387) & 0.006479 & (0.003023) \\
7                   & 0.006631  & (0.001299) & 0.006611 & (0.003950)  \\
8                   & 0.006526  & (0.001282) & 0.006659 & (0.006091) \\
9                   & 0.006491  & (0.001327) & 0.006549 & (0.006716)  \\
10                  & 0.006451  & (0.001365) & 0.006744 & (0.004720) \\
\hline
\end{tabular}
\end{center}
\caption{ElasticNet CV(10) fit for empirical data}
\label{elnet table 2}
\normalsize
\end{table}

\subsection{Drawdown market generator: results and discussion}

\begin{figure*}[h!]
     \centering
     \begin{subfigure}[b]{0.48\textwidth}
         \centering
         \includegraphics[width=\textwidth]{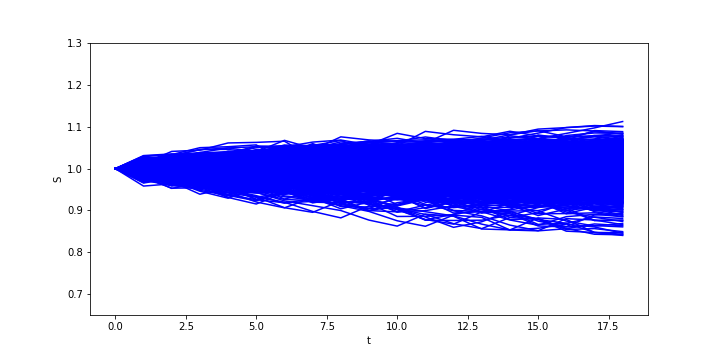}
         \caption{Generated VAE}
         \label{fig:y equals x1}
     \end{subfigure}
     \hfill
          \begin{subfigure}[b]{0.48\textwidth}
         \centering
         \includegraphics[width=\textwidth]{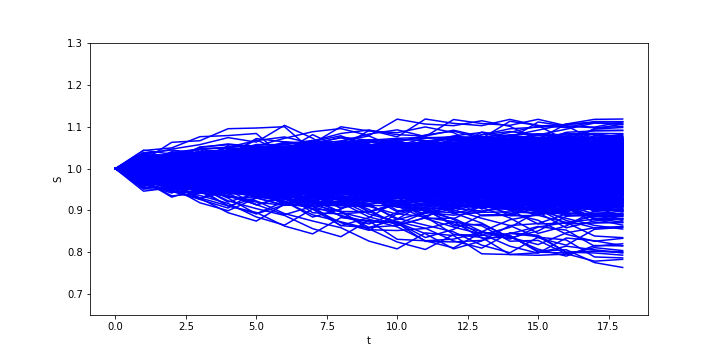}
         \caption{Generated $\xi$-VAE}
         \label{fig:y equals x2}
     \end{subfigure}
     \hfill
     \begin{subfigure}[b]{0.48\textwidth}
         \centering
         \includegraphics[width=\textwidth]{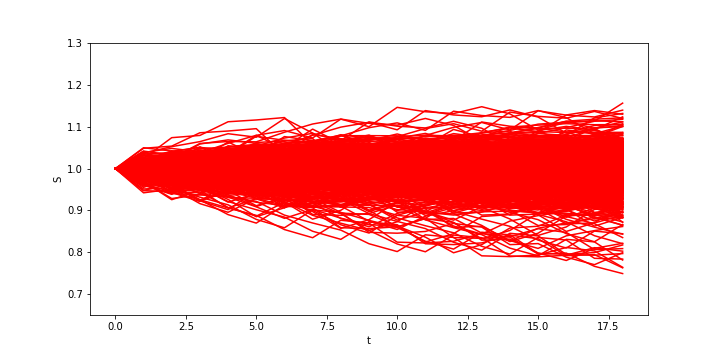}
         \caption{Original}
         \label{fig:three sin x1}
     \end{subfigure}
     \hfill
        \caption{Generated paths versus original sample paths}
        \label{fig: generated original paths}
\end{figure*}

\begin{figure*}[h!]
     \centering
     \begin{subfigure}[b]{0.48\textwidth}
         \centering
         \includegraphics[width=\textwidth]{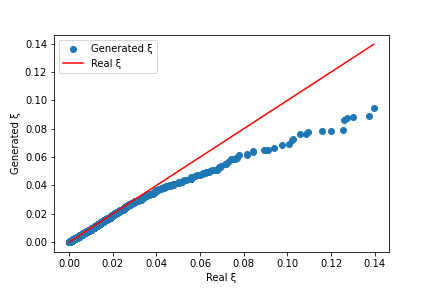}
         \caption{VAE generated versus actual $\xi$ quantiles}
         \label{fig:y equals x3}
     \end{subfigure}
     \hfill
     \begin{subfigure}[b]{0.48\textwidth}
         \centering
         \includegraphics[width=\textwidth]{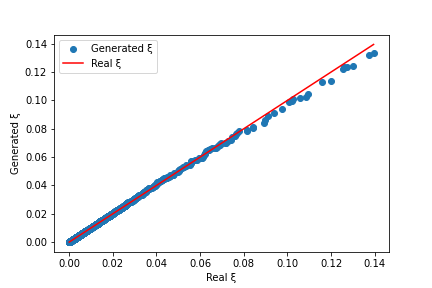}
         \caption{$\xi$-VAE generated versus actual $\xi$ quantiles}
         \label{fig:three sin x2}
     \end{subfigure}
     \hfill
        \caption{Drawdown QQ plots}
        \label{fig: qq plots}
\end{figure*}

The main findings are summarized in Figures \ref{fig: generated original paths} and \ref{fig: qq plots}. More details can again be found in the appendix on numerical results (\ref{Appendix 3: numerical results}). Figure \ref{fig: generated original paths} shows the generated paths of a standard VAE model versus the drawdown market generator, denoted by $\xi$-VAE. It is clear that the original VAE generates reasonably realistic scenarios, but is reminiscent of Figure \ref{Drawdown distribution} in the sense that the paths are non-Brownian but still too centered around the Brownian-like distribution (the densely colored areas). This is also clear from Figure \ref{fig: drawdown ret dist}. To summarize our architecture from this perspective, the standard VAE reconstruction term focuses on reproducing the top distribution ($L^2$ loss over a particular batch) such that it misses out on the tails. The $\xi$-VAE fits both the top and the bottom distribution such that it includes more tail (drawdown) scenarios. In sum, we find that the standard VAE shows a lack of more extreme adverse scenarios. 

This also becomes apparent in Figure \ref{fig: drawdown scatter}, which plots the synthetic and actual drawdowns as a scatter. The standard VAE does a good job in capturing most part of the moderate drawdowns, but is scattered towards the higher drawdowns and produces none in the tail, while this is resolved with the $\xi$-VAE. Moreover, from the cumulative drawdown distribution the Kolmogorov-Smirnoff test that the synthetic drawdowns come from the same distribution as the original one cannot be rejected for both the standard and $\xi$-VAE, as we know the standard KS statistic is not sensitive to diverging tail distributions. A closely related graph is Figure \ref{fig: qq plots}, which plots the same scatter but using ordered observations (i.e. quantiles) and a first bisector line which would fit the data if the synthetic drawdown distribution has exactly the same quantiles as the original distribution. Again we find that the VAE misses out on capturing tail drawdown scenarios, while $\xi$-VAE does a much better job. This is not done by just memorizing the training samples and reproducing them exactly as the bottleneck VAE architecture forces a lower-dimensional representation of the training data (i.e. the non-parametric DGP). The autoencoder serves as a dimension reduction mechanism, rather than trivially mirroring the training data. That is why one sees different simulated paths than identical ones to the input samples, and why by using the trained decoder as a non-parametric DGP one can create an infinite amount of genuinely new data with the same distribution as the training data. Moreover, both train and validation losses flat out at convergence (please find Appendix \ref{Appendix 3: drawdown market generator details} with details on the training and test convergence) while purely mirroring data would imply training losses would be minimal at the cost of high validation losses. 

\section{Conclusion}\label{conclusion}
Learning functions on paths is a highly non-trivial task. Rough path theory offers tools to approximate functions on paths that have sufficient regularity by considering a controlled differential system and iterating the effect of intervals of the (rough) path on the (smooth) outcome function. One key path dependent risk functional in finance is a portfolio value path's drawdown. This paper takes the perspective of portfolio drawdown as a non-linear dynamic system, a controlled differential equation, rather than directly analyzing the solution, or the exact expression of drawdown containing the running maximum operator. It relied on some important insights from the theory of rough paths to pinpoint that by taking this perspective, rather than continuous differentiability, a more general regularity condition of bounded changes in drawdown as a function of changes in input paths is sufficient to use a non-commutative exponential to interpolate between the two types of path dependent effects of a driving path on its resulting drawdown by numerically approximating the average nested effect. This thus allows one to locally approximate the drawdown function without having to evaluate its exact expression and in the meantime leapfrogging the inherent path dependence of its time derivatives. The linear dependence of a path's drawdown on its differentiable signature representation then allows one to embed drawdown evaluations into systems of differentiable equations such as generative ML models. We prove this required regularity w.r.t. drawdown: the boundedness of its time derivatives allows us to write it in a more general controlled differential equation notation, and the Lipschitz regularity of the drawdown function assures bounded errors in its convergence proofs. That is why we then illustrated the consistency of the approximation: on simulated fractional Brownian and on real-world data, regression results exhibit a good fit for penalized linear regression (Elastic Net regression) when one has a reasonable sample size. Finally, our proposed application of the approximation is a so-called market generator model that evaluates the synthetic time series samples in terms of their drawdown. We argue that by including a drawdown learning objective, upon convergence of this reconstruction term in train and validation steps, one gets more realistic scenarios than the standard VAE model that exhibit quantifiably (i.e. the measured loss convergence) close drawdowns to the empirical ones, hence effectively reproducing the drawdown distributions without trivially mapping input paths to output paths.

Future work will focus on extending this application and further applying it to portfolio drawdown optimization, where one can for example test a data-hungry drawdown control strategy over a host of synthetic scenarios rather than the single historical one. In this context, learning drawdown scenarios can be seen as a denoising mechanism to first remove noise (encoding), then adding new noise (decoding new samples), then constructing an ensemble average strategy over a host of noisy scenarios that cancels out by construction instead of by assumption for historical scenarios (e.g. bootstrap methods). This non-parametric Monte Carlo idea could hence robustify one's methodology further as a mathematically principled data augmentation technique.  Moreover, the non-parametric nature of our Monte Carlo engine opens possibilities to full non-parametric pricing of path dependent (e.g. barrier) max drawdown insurance contingent claims.

\clearpage
\appendix
\section*{Appendix: Controlled differential equations and path signatures}\label{appendix}
\subsection*{Controlled differential equations}
    We are generally interested in a CDE of the form:
    \begin{equation}
        dY_t = g(Y_t)dX_t
    \end{equation}
    where $X$ is a continuous path on $[0,T] \xrightarrow{} \mathbb{R}$, called the driving signal of the dynamic system. $g$ is a $\mathbb{R} \xrightarrow{} \mathbb{R}$ mapping called the physics that models the effect of $dX_t$ on the response $dY_t$. A controlled differential equation (CDE) distinguishes itself from an ordinary differential equation in the sense that the system is controlled or driven by a path ($dX$) rather than time ($dt$) or a random variable (stochastic SDEs, $d\varepsilon$).

\subsection*{Signatures}
    A series of coefficients of the path that naturally arrives from this type of equation is the series of iterated integrals of the path, or the path signature $\Phi$. The signature of a path $X: [0, T] \xrightarrow{} \mathbb{R}$ can be defined as the sequence of ordered coefficients:
    \begin{equation}
        \Phi(X) = (1, \Phi_1, ..., \Phi_n,...)
    \end{equation}
    where for every integer n (order of the signature):
    \begin{equation}\label{eq sig def}
        \Phi_n(X) = \underset{u_1<...<u_n, u_1,...,u_n \in [0,T]}{\int...\int}dX_{u_1}\otimes...\otimes dX_{u_n}
    \end{equation}
    where we define the $n$-fold iterated integral as all the integrals over the $n$ ordered intervals $u_i$ in [0,T]. The signature is the infinite collection for $n \xrightarrow{} \infty$, although typically lower level M truncations are used. 
    \begin{equation}
        \Phi^M(X) = (1, \Phi_1, ..., \Phi_M)
    \end{equation}
        
\subsection*{Picard Iterations}

    The idea behind a Picard iteration is to define for:
    \begin{equation}
        dY_t = g(Y_t)dX_t
    \end{equation}
    a sequence of mapping functions $Y(n): [0,T]  \xrightarrow{} \mathbb{R}$ recursively such that for every $t \in [0,T]$:
    \begin{equation}
        Y(0)_t \equiv y_0
    \end{equation}
    \begin{equation}
        Y(1)_t = y_0 + \int_{0}^{t}g(y_0)dX_s
    \end{equation}
    \begin{equation}
        Y(n+1)_t = y_0 + \int_{0}^{t}g(Y(n)_s)dX_s
    \end{equation}     

    Now by simple recursion one finds that (for a linear $g$):
    \begin{equation}
        Y(n)_t = y_0 + \sum_{k}^{n}g^{\otimes k}(y_0)\underset{u_1<...<u_n, u_1,...,u_n \in [0,T]}{\int...\int}dX_{u_1}\otimes...\otimes dX_{u_n}
    \end{equation}
    Such that a solution for $Y_t$ would be:
    \begin{equation}\label{eq yt approx}
        Y_t = y_0 + \sum_{k}^{\infty}g^{\otimes k}(y_0)\underset{u_1<...<u_k, u_1,...,u_k \in [0,T]}{\int...\int}dX_{u_1}\otimes...\otimes dX_{u_k}
    \end{equation}
    This result shows how the signature, as an iterative representation of a path over ordered intervals, naturally arises from solving CDEs using Picard iterations, and how it is a natural generalization of Taylor series on the path space when the physics is linear.   

    \begin{equation}
        g^{\circ 1} = g
    \end{equation}  
    \begin{equation}
        g^{\circ n+1} = D(g^{\circ n})g
    \end{equation}
    then it is natural to define the $N$-step Taylor expansion for $Y_t$ by $\hat{Y}(N)_t$ as:
    \begin{equation}
        \hat{Y}(N)_t = y_0 + \sum_{n=1}^{N}g^{\circ n}(y_0)\underset{u_1<...<u_n, u_1,...,u_n \in [0,T]}{\int...\int}dX_{u_1}\otimes...\otimes dX_{u_n}
    \end{equation}
    Clearly, $\hat{Y}(N)_t$ is linear in the truncated signature of X up to order N\footnote{\label{footnote factorial}Moreover, the error bounds of $\hat{Y}(N)_t$ to approximate $Y_t$ yield a factorial decay in terms of N, i.e. $|Y_t - \hat{Y}(N)_t| \leq C\frac{|X|^{N+1}_{1, [0,t]}}{N!}$. This result can be extended to p-geometric rough paths where g is a $Lip(K)$ where $K>p-1$ \cite{boedihardjo2015uniform}.}.  
    
\paragraph{\textbf{Example}} The simplest example of:
\begin{equation}
    dY_t = g(Y_t)dX_t
\end{equation}
is a linear physics for a linear path X:
\begin{equation}
    dY_t = Y_tdX_t
\end{equation}
where:
\begin{equation}
    g = g^{\circ 1}
\end{equation}
\begin{equation}
    g^{\circ n + 1} = D(g^{\circ n})g
\end{equation}
\begin{equation}
    X_t = X_0 + \frac{X_T - X_0}{T} t
\end{equation}
and assuming:
\begin{equation}
    y_0 = 1
\end{equation}
\begin{equation}
    X_0 = 0
\end{equation}

Indeed, this yields the exponential function $Y_t = \exp(X_t)$. For non-linear driving signals (where the order of the events matter), one generally gets a non-commutative version of the exponential function in Eq. (\ref{eq yt approx})! For linear time, the order of events does not matter and we generally get the increment of the path raised to the level of the iterated integral, divided by the level factorial (i.e. the area of an $n$-dimensional simplex).

This can be seen from:
\begin{equation}\label{eq exp approx}
    Y_t = y_0 + \sum_{n=1}^{N}Y^{\circ n}\underset{u_1<...<u_n, u_1,...,u_n \in [0,T]}{\int...\int}dX_{u_1}\otimes...\otimes dX_{u_n}
\end{equation}
now it is easy to see that:
\begin{equation}
\begin{split}
    \Phi^n = \underset{u_1<...<u_n, u_1,...,u_n \in [0,t]}{\int...\int}dX_{u_1}\otimes...\otimes dX_{u_n} \\ = \underset{0 < u_1 < ... < u_n < t}{\int...\int}d(\frac{X_t}{t}u_1)...d(\frac{X_t}{t}u_n) \\ = \prod_{j=1}^{n}(\frac{X_t}{t})\underset{0 < u_1 < ... < u_n < t}{\int...\int}du_1...du_n \\ = \frac{1}{t^N}\prod_{j=1}^{n}(X_t)\frac{t^N}{n!} = \frac{(X_t)^n}{n!}
\end{split}
\end{equation}
such that:
\begin{equation}
    Y_t = y_0 + \sum_{n=1}^{N}y_0\frac{1}{n!}(X_t)^n = 1 + (X_t) + \frac{(X_t)^2}{2!} + \frac{(X_t)^3}{3!} + ...
\end{equation}
Which is the classical Taylor expansion for the exponential function, i.e. linear physics integrated over a linear path (time). Now the signature approximation is the generalization of this idea to the path space (i.e. Y is a function on a path), where the path $X_t$ need not be a linear map of time, and the physics $g$ need not be linear (e.g. drawdown Eq. (\ref{f definition})).

\section*{Appendix 2: Variational Autoencoders}\label{Appendix 2: variational autoencoder}

This section gives an overview of the variational autoencoder architecture. VAEs converge fast\footnote{First experiments with VAE resulted in similar performance metrics with GAN, where VAE was trained c.30 seconds and GAN c.30 minutes.}, are more stable than competing architectures, and they allow us to interpret the (conditional) distributions after training. 

Below, we discuss the mappings $f_{\Theta}(X)$ and $f^{-1}_{\Theta}(Z)$ between the data and latent distributions, the original loss function $\mathcal{L}(X, X')$ (which we revised in drawdown terms in this paper), the training algorithm and the hyperparameters.

\subsection*{General architecture}

\begin{figure}[!h]
    \hspace*{-0.75cm}   
         \includegraphics[width=0.60\textwidth]{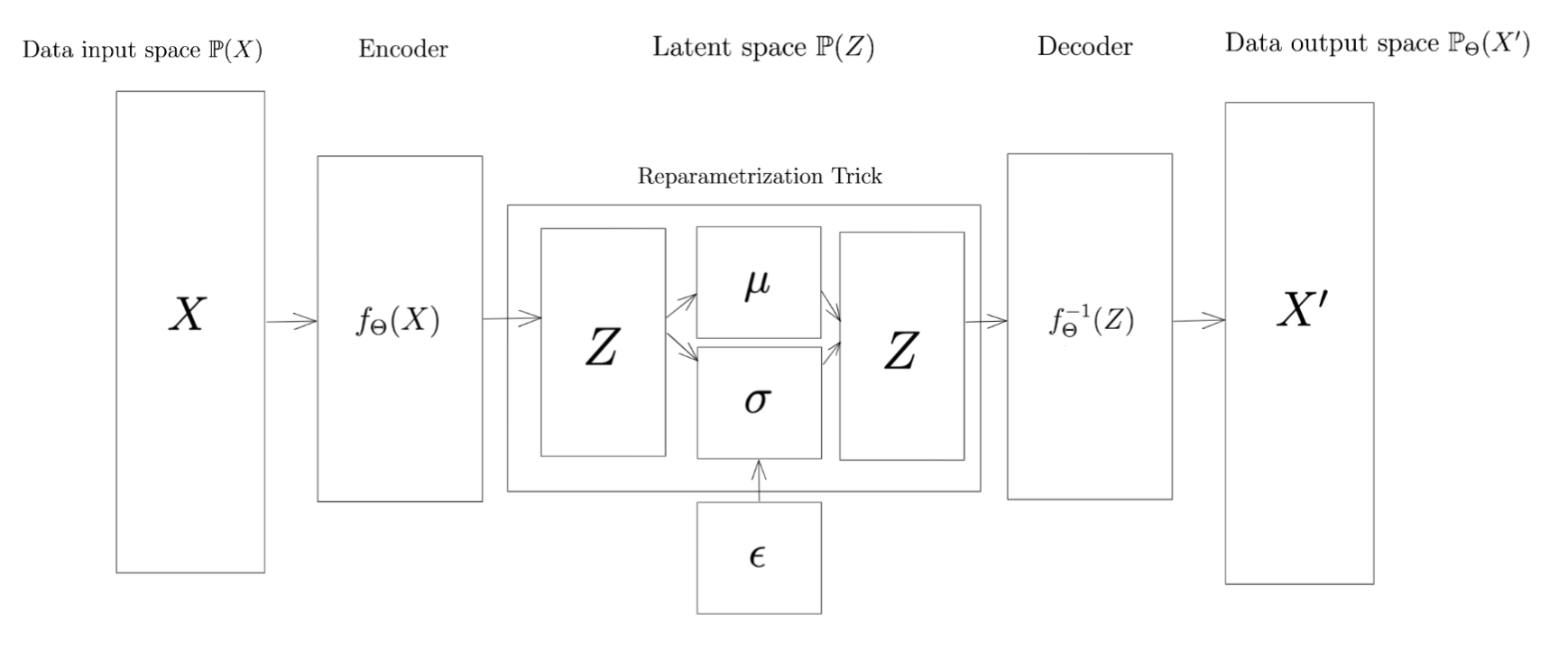}
     \hfill
        \caption{VAE architecture overview}
        \label{VAE architecture overview}
\end{figure}

The architecture of a VAE is summarized in Figure \ref{VAE architecture overview}. As input we have a $D_p$-dimensional space $X$, the physical data domain that we can measure. Using a flexible neural network mapping $f_{\Theta}: \mathbb{R}^{D_p} \rightarrow \mathbb{R}^{D_l}, D_l << D_p$, called the encoder, we compress the dimension of the data into a $D_l$-dimensional latent space $Z$, e.g. 10-dimensional. Using the reparametrization trick (\cite{kingma2014semi}) we map $Z$ onto a mean $\mu$ and standard deviation $\sigma$ vector, i.e. onto a $D_l$-dimensional Gaussian, e.g. a 10-dimensional multi-variate normal distribution. 
Starting from multi-variate normal data, we can recombine $\mu$ and $\sigma$ into a $D_l$-dimensional $Z$. The decoder neural network $f^{-1}_{\Theta}: \mathbb{R}^{D_l} \rightarrow \mathbb{R}^{D_p}$ maps the latent space back to the output space $\mathbb{P}_{\Theta}(X')$ where $X'$ can be considered reconstructed samples in the training step, or genuinely new or synthetic samples in a generator step. The quality of the VAE clearly depends on the similarity between $\mathbb{P}(X)$ and $\mathbb{P}_{\Theta}(X')$.

\subsection*{Encoder - decoder networks}
Let us now zoom in on $f_{\Theta}(X)$ and $f^{-1}_{\Theta}(Z)$. Each neural network consist of one layer of $J$ mathematical units called neurons:

\begin{equation}
    f_{\Theta_j} = A(\sum_i^{D_l} \theta_{i,j} x_i)
\end{equation}
 
 Every neuron takes linear combinations $\theta_i$ of the input data point $x_i$ and is then activated using a non-linear activation function $A$, such as rectified linear units (ReLU), hyperbolic tangent (tanh) or sigmoid. In this paper we use a variant of ReLU called a leaky ReLU:
 
\begin{equation}
    LReLU(x) = 1_{x < 0} \alpha x + 1_{x \geq 0} x
\end{equation}
where $\alpha$ is a small constant called the slope of the ReLU. 
The $J$ neurons are linearly combined into the next layer (in this case $Z$):
\begin{equation}
    Z_k := \sum_j^J \theta_{j,k} f_{\Theta_j}
\end{equation}
for every $k$ in $D_l$. The decoder map can formally be written down like the encoder, but in reverse order.
\subsection*{Loss function}
The loss function of a VAE generally consists of two components, the latent loss ($\mathcal{L}_L$) and the reconstruction loss ($\mathcal{L}_R$):
\begin{equation}
    \mathcal{L}(X, X') = \beta  \mathcal{L}_L + (1-\beta)  \mathcal{L}_R
\end{equation}
The latent loss is the Kullback-Leibler discrepancy between the latent distribution under its encoded parametrization, the posterior $f_{\Theta}(X) = \mathbb{P}_{\Theta}(Z|X)$, and its theoretical distribution, e.g. multi-variate Gaussian $\mathbb{P}(Z)$. Appendix B in \cite{kingma2014stochastic} offers a simple expression for $\mathcal{L}_L$.
The reconstruction loss is the cost of reproducing $\mathbb{P}_{\Theta}(X')$ after the dimension reduction step, and originally computed by the root of the mean squared error (RMSE or $L2$-loss) between X and X'.
\begin{equation}
\begin{split}
    \mathcal{L}(X, X') = \beta  \frac{1}{2}\sum_k^K(1 + \sigma - \mu^2 - \exp(\sigma)) \\ + (1-\beta)  \mathbb{E}(||X-X'||^2)
\end{split}
\end{equation}

\subsection*{Training}
In terms of training, the learning algorithm is analoguous to most deep learning methods. Optimal loss values $\mathcal{L}^*$ are determined by stochastically sampling batches of data and alternating forward and backward passes through the VAE. For each batch the data is first passed through the encoder network and decoder network (forward pass), after which $\mathcal{L}$ is evaluated in terms of $\Theta$. At each layer, the derivative of $\mathcal{L}$ vis-a-vis $\Theta$ can easily be evaluated. Next (backward pass), we say the calculated loss backpropagates through the network, and $\Theta$ are adjusted in the direction of the gradient $\nabla_{\Theta}\mathcal{L}$ with the learning rate as step size. The exact optimizer algorithm we used for this is Adam (Adaptive moments estimation, \cite{kingma2014adam}). Finally, we also use a concept called regularization, which penalizes neural models that become too complex or overparametrized. We used a tool called dropout, that during training randomly sets a proportion of parameters in $\Theta$ equal to zero, and leaves those connections at zero that contribute the least to the prediction.
\subsection*{Hyperparameters}
In summary, the hyperparameters of this architecture are: (1) the number of neurons in the encoder, (2) the number of neurons in the decoder, (3) the latent dimension $D_l$, (4) the learning rate $l$, (5) the optimizer algorithm, (6) the dropout rate, (7) the batch size $N_b$, (8) batch length $\tau$ and (9) number of training steps $N_t$. We opted for the following set up, which was optimized using grid search: 50, 50, 10, 0.001, Adam, 0.01, 50, 20, 200 (with early stopping criteria\footnote{Not all $N_t$ = 200 steps are executed if the objective values (both total and individual terms) are not improved over e.g. the last $I$ training iterations. For a standard VAE this is the total objective value, latent loss and the reconstruction loss, for the $\xi$-VAE drawdown convergence is now an additional criterion. After some fine-tuning $I$ was set to $3$.}). 

\subsection*{Generation}
After training, in the sampling or generation step, we start from a random $D_l$-dimensional noise $\epsilon \sim \mathbb{P}(Z)$ which is $D_l$-variate Gaussian. Now, we simply need one decode step to generate new samples of $\mathbb{P}_{\Theta}(X')$.

\clearpage
\newpage
\section*{Appendix 3: Numerical results}\label{Appendix 3: numerical results}

\parbox{15cm}{\subsection*{Consistency and convergence of linear drawdown approximation in the signature space}}
\begin{figure}[h!]
     \centering
     \begin{subfigure}{\textwidth}
         \centering
         \includegraphics[width=0.4\textwidth]{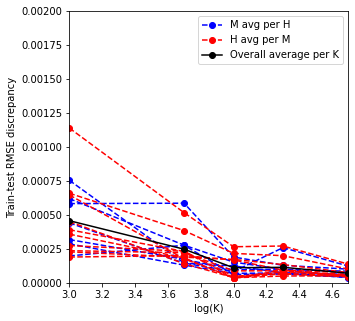}
     \end{subfigure}
    \parbox{15cm}{\caption{\textit{In- and out-of-sample accuracy difference as a function of sample size $log(K)$ (Red = averaging out over $H$ to get a line per value of $M$, blue = opposite and black is overall average over both M and H).}}\label{is oos discrepancy}}
     \hfill
    \quad
\end{figure}

% \begin{figure*}
%      \centering
%      \begin{subfigure}{\textwidth}
%          \centering
%          \includegraphics[width=0.75\textwidth]{MERGED order R2 with detail Updated June.png}
%      \end{subfigure}
%      \hfill
%     \quad
%      \hfill
%         \caption{Train and test $R^2$ as a function of signature approximation order $M$. Average over $K$ and $H$ is shown on the left. An average per $K$ is shown on the right.}
%         \label{Train test R2 order}
% \end{figure*}

% \clearpage
% \newpage
\begin{figure}[h!]
     \centering
     \begin{subfigure}{\textwidth}
         \centering
         \includegraphics[width=0.75\textwidth]{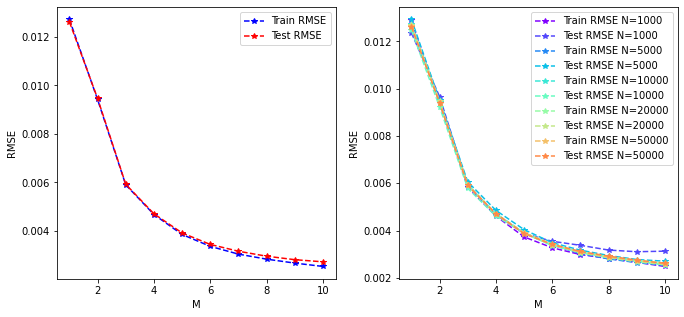}
     \end{subfigure}
     \hfill
    \quad
     \hfill
        \parbox{15cm}{\caption{\textit{Train and test $RMSE$ (y-axis) as a function of signature approximation order $M$. Average over $K$ and $H$ is shown on the left. An average per $K$ is shown on the right.}}\label{Train test RMSE order}}
\end{figure}

\clearpage
\begin{figure}
     \centering
     \begin{subfigure}{\textwidth}
         \centering
         \includegraphics[width=0.4\textwidth]{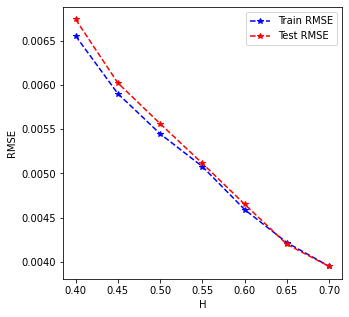}
     \end{subfigure}
     \hfill
    \quad
     \hfill
     \centering
        \caption{\textit{Train and test $RMSE$ as a function of the roughness $H$}}
        \label{Train test R2 roughness}
\end{figure}

% \begin{figure*}
%      \centering
%      \begin{subfigure}{\textwidth}
%          \centering
%          \includegraphics[width=0.75\textwidth]{Factorial decay overview R2.png}
%      \end{subfigure}
%      \hfill
%     \quad
%      \hfill
%         \caption{Factorial error (R2) decay as a function of $H$}
%         \label{Factorial error (R2) decay as a function of H}
% \end{figure*}

\subsection*{Empirical data overview}\label{Appendix 3: empirical data overview}

\begin{figure}[h!]
     \centering
     \begin{subfigure}{\textwidth}
         \centering
         \includegraphics[width=0.85\textwidth]{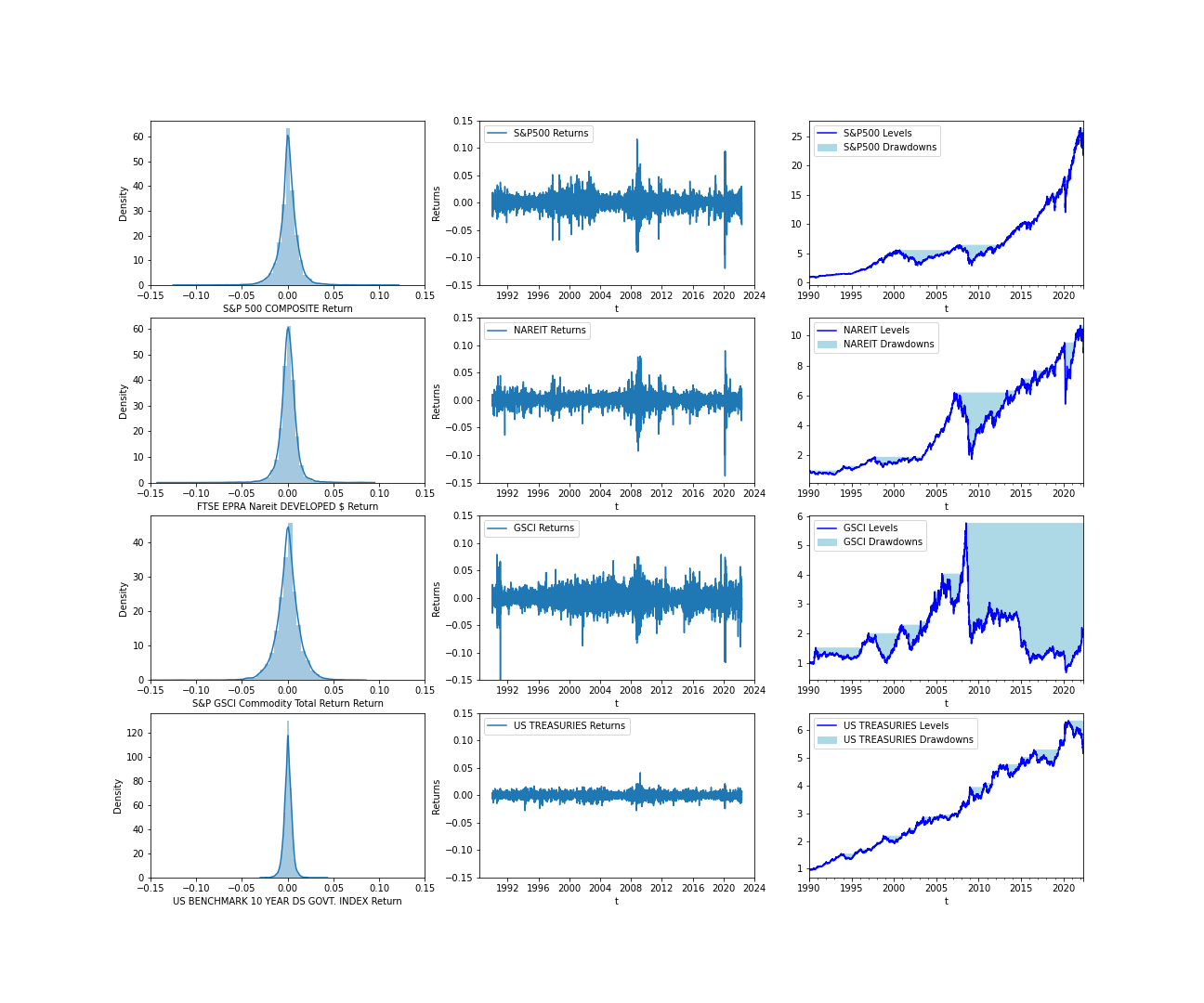}
     \end{subfigure}
     \hfill
        \parbox{11cm}{\caption{\textit{Data overview: return, volatility and drawdown heterogeneity among asset classes.}} \label{fig: data overview}}
\end{figure}

\clearpage
\subsection*{Empirical drawdown distribution}\label{Appendix 3: empirical drawdown distribution}

\begin{figure}[h!]
     \centering
     \begin{subfigure}{\textwidth}
         \centering
         \includegraphics[width=0.60\textwidth]{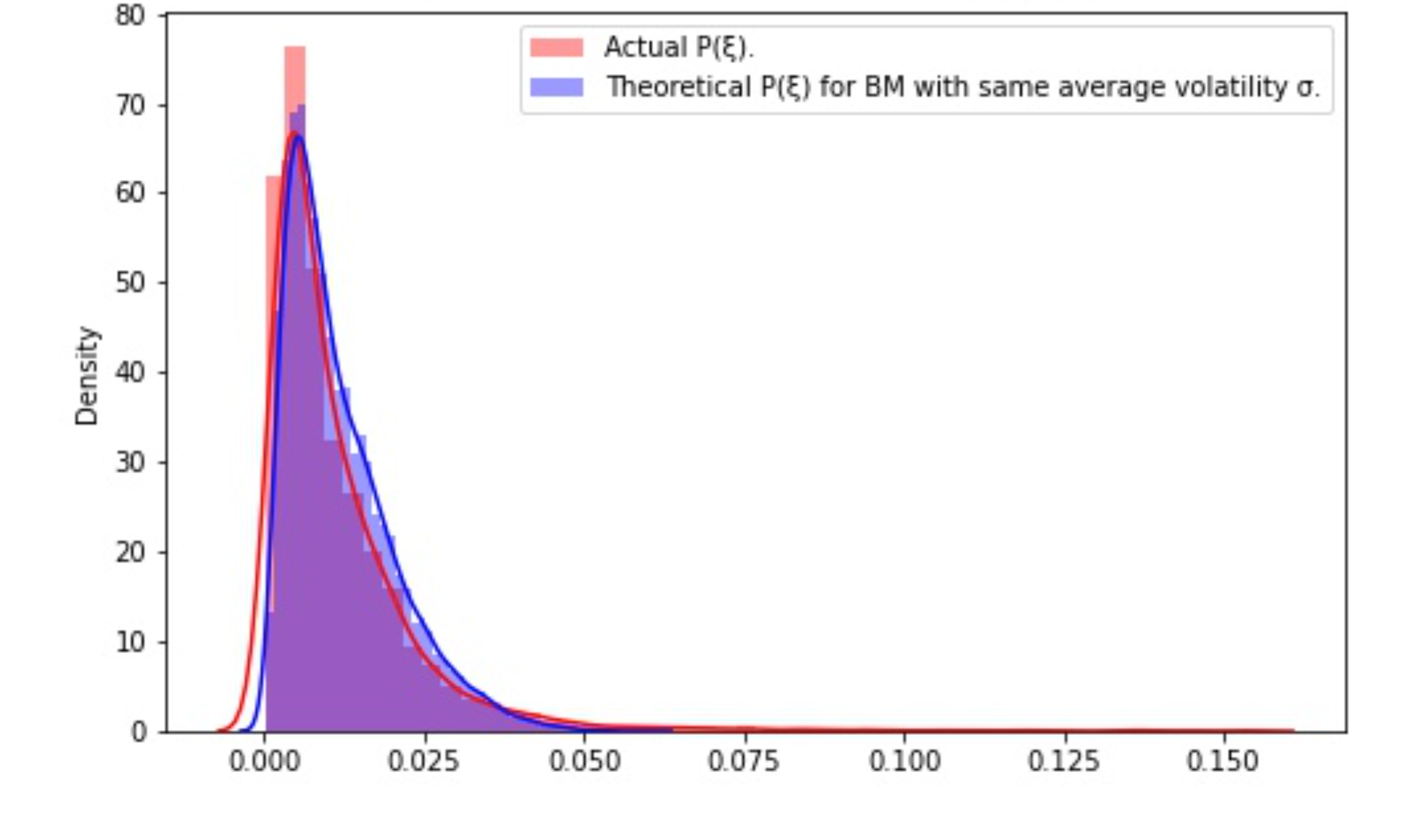}
     \end{subfigure}
     \hfill
        \parbox{15cm}{\caption{\textit{Drawdown distribution of a real-world mixed asset class portfolio versus the fitted theoretical drawdown distribution if the underlying DGP would be Brownian Motion (BM) with the same average volatility as the sample paths.}}\label{Drawdown distribution}}
\end{figure}

\begin{figure}[h!]
     \centering
     \begin{subfigure}{\textwidth}
         \centering
         \includegraphics[width=0.60\textwidth]{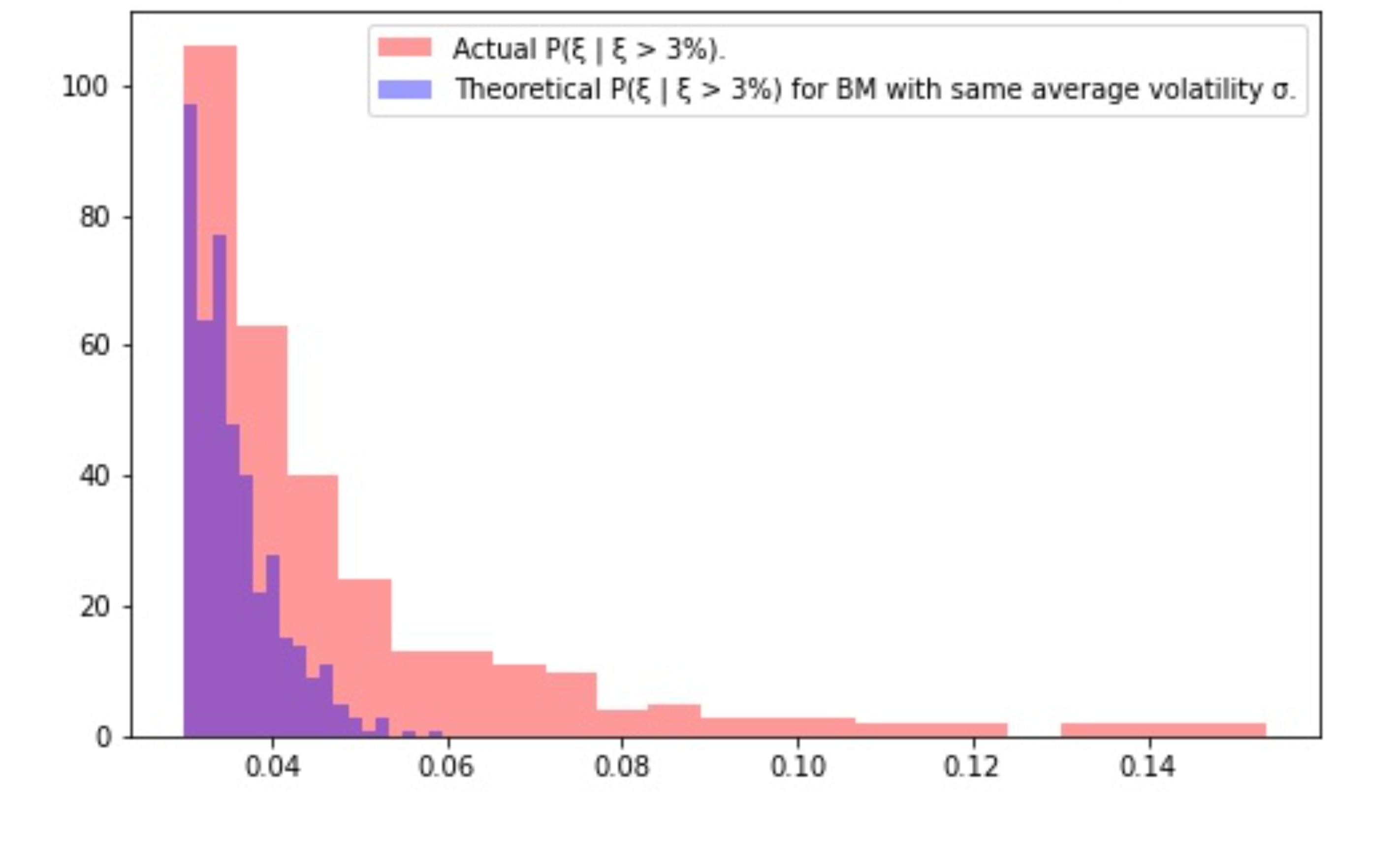}
     \end{subfigure}
     \hfill
        \parbox{10cm}{\caption{\textit{Zoom on the tail of the empirical versus theoretical drawdown distribution.}}\label{Tail drawdown distribution}}
\end{figure}

\clearpage
\subsection*{Empirical data: drawdown approximation fit}\label{Appendix 3: empirical data drawdown approximation fit}

\begin{figure}[h!]
     \centering
     \begin{subfigure}{\textwidth}
         \centering
         \includegraphics[width=0.80\textwidth]{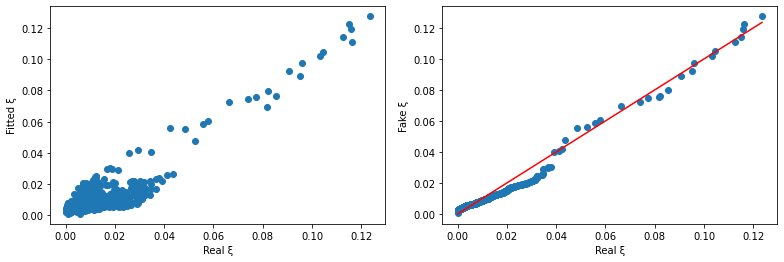}
     \end{subfigure}
     \hfill
        \caption{Train fit empirical data for one set of portfolio weights (equal weighted): scatter (left) and QQ plot (right)}
        \label{Train fit empirical data}
\end{figure}

\begin{figure}[h!]
     \centering
     \begin{subfigure}{\textwidth}
         \centering
         \includegraphics[width=0.80\textwidth]{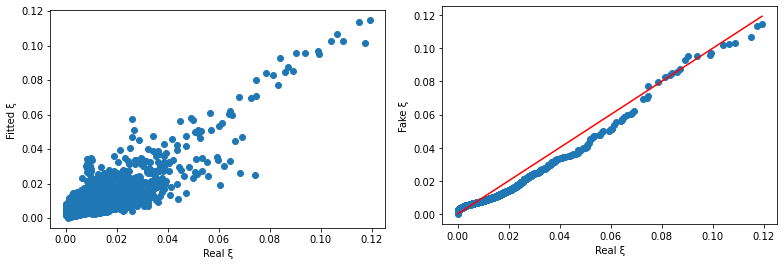}
     \end{subfigure}
     \hfill
        \caption{Test fit empirical data for one set of portfolio weights (equal weighted): scatter (left) and QQ plot (right)}
        \label{Test fit empirical data}
\end{figure}

\subsection*{Drawdown market generator}\label{Appendix 3: drawdown market generator results}

\begin{figure*}[ht!]
     \centering
          \begin{subfigure}[b]{0.48\textwidth}
         \centering
         \includegraphics[width=\textwidth]{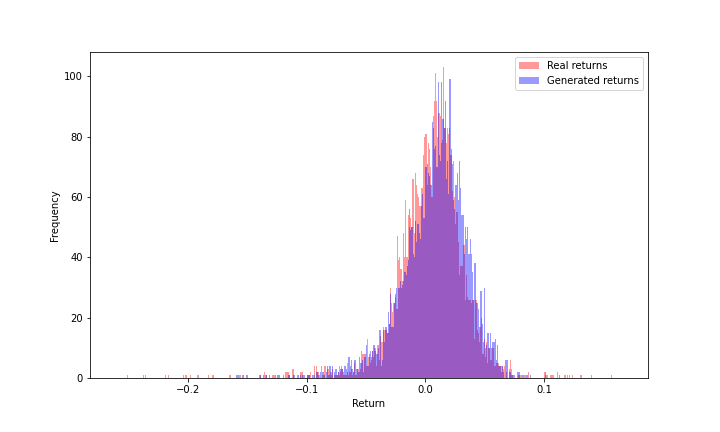}
         \caption{Return distribution VAE}
         \label{fig:y equals x4}
     \end{subfigure}
     \hfill
     \begin{subfigure}[b]{0.48\textwidth}
         \centering
         \includegraphics[width=\textwidth]{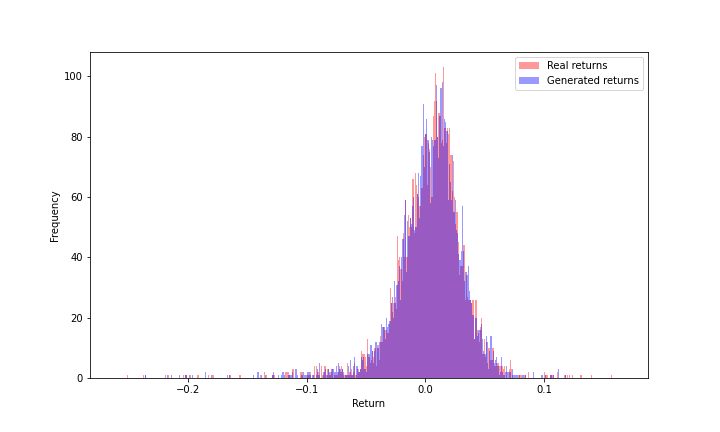}
         \caption{Return distribution $\xi$-VAE}
         \label{fig:three sin x3}
     \end{subfigure}
     \hfill
     \begin{subfigure}[b]{0.48\textwidth}
         \centering
         \includegraphics[width=\textwidth]{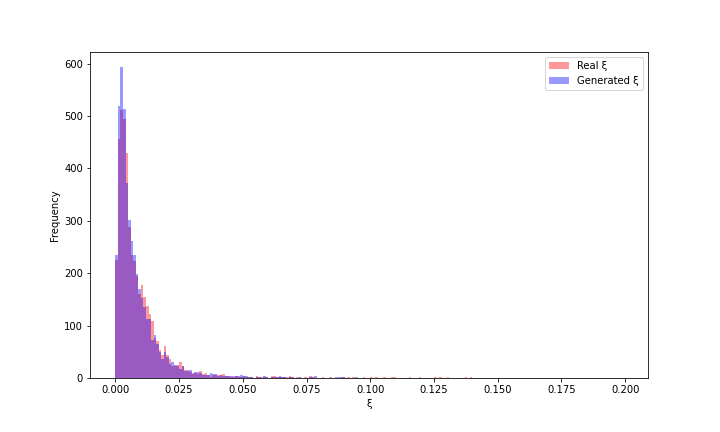}
         \caption{Drawdown distribution VAE}
         \label{fig:y equals x5}
     \end{subfigure}
     \hfill
     \begin{subfigure}[b]{0.48\textwidth}
         \centering
         \includegraphics[width=\textwidth]{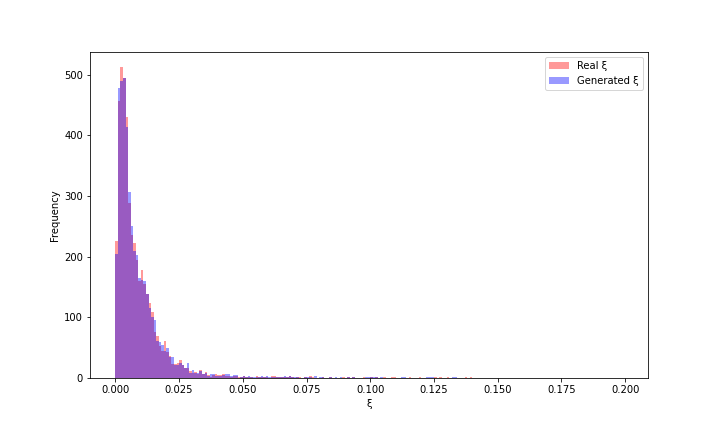}
         \caption{Drawdown distribution $\xi$-VAE}
         \label{fig:three sin x4}
     \end{subfigure}
     \hfill
        \caption{Generated versus actual drawdowns and returns distribution}
        \label{fig: drawdown ret dist}
\end{figure*}

\begin{figure*}[ht!]
     \centering
     \begin{subfigure}[b]{0.48\textwidth}
         \centering
         \includegraphics[width=\textwidth]{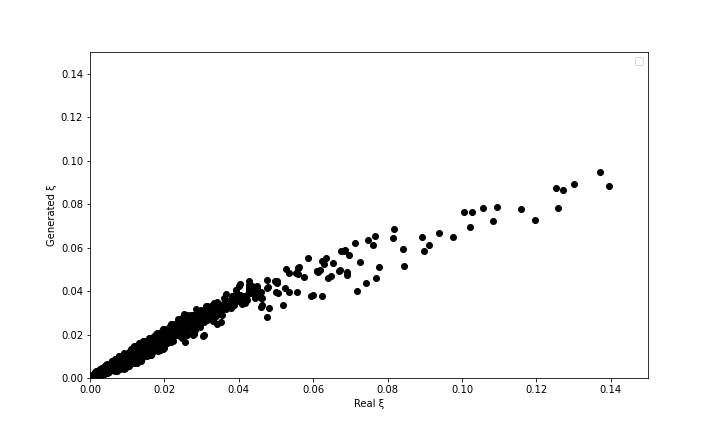}
         \caption{VAE}
         \label{fig:y equals x6}
     \end{subfigure}
     \hfill
     \begin{subfigure}[b]{0.48\textwidth}
         \centering
         \includegraphics[width=\textwidth]{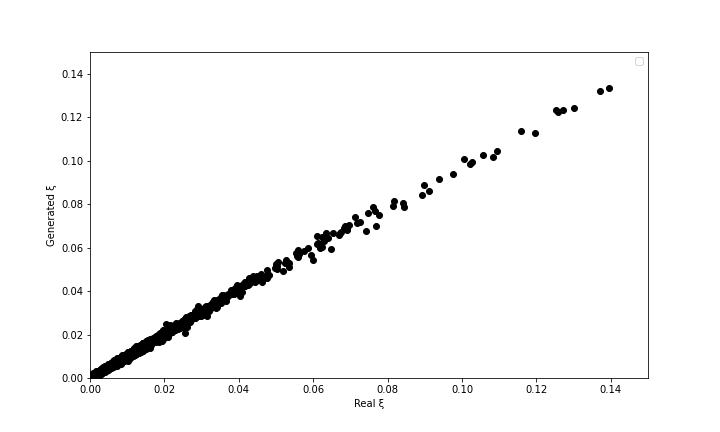}
         \caption{$\xi$-VAE}
         \label{fig:three sin x5}
     \end{subfigure}
     \hfill
        \caption{Generated versus actual drawdowns scatter }
        \label{fig: drawdown scatter}
\end{figure*}

\clearpage
\subsection*{Drawdown market generator loss convergence}\label{Appendix 3: drawdown market generator details}

\begin{figure}[!h]
     \centering
     \begin{subfigure}{\textwidth}
         \centering
         \includegraphics[width=0.6\textwidth]{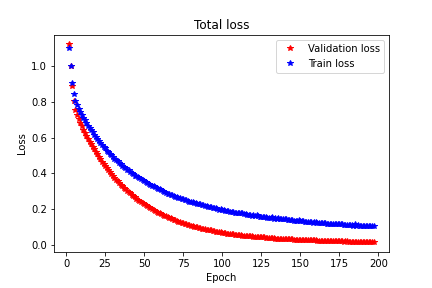}
     \end{subfigure}
     \hfill
        \parbox{9cm}{\caption{Drawdown market generator: total loss convergence (rescaled)}}
        \label{Total loss}
\end{figure}

\begin{figure}[!h]
     \centering
     \begin{subfigure}{\textwidth}
         \centering
         \includegraphics[width=0.6\textwidth]{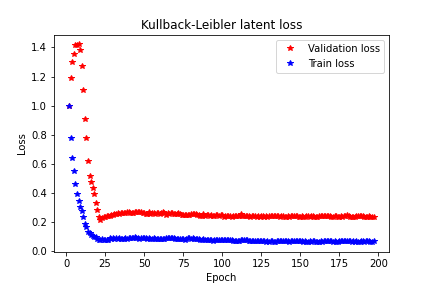}
     \end{subfigure}
     \hfill
        \parbox{9cm}{\caption{Drawdown market generator: latent (KL) loss convergence (rescaled)}}
        \label{KL loss}
\end{figure}

\clearpage
\begin{figure}[!h]
     \centering
     \begin{subfigure}{\textwidth}
         \centering
         \includegraphics[width=0.6\textwidth]{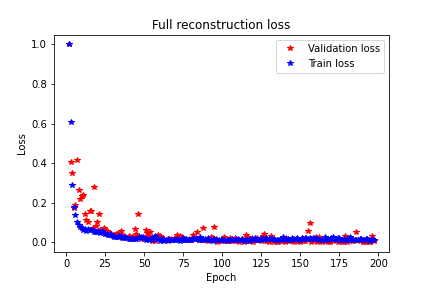}
     \end{subfigure}
     \hfill
        \parbox{10cm}{\caption{Drawdown market generator: total reconstruction loss convergence (rescaled)}}
        \label{RL loss}
\end{figure}

\begin{figure}[!h]
     \centering
     \begin{subfigure}{\textwidth}
         \centering
         \includegraphics[width=0.6\textwidth]{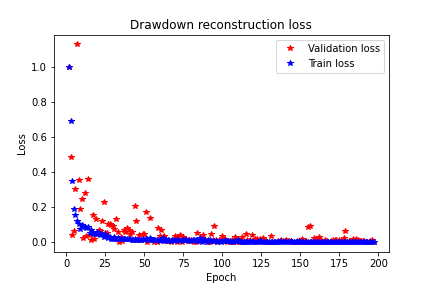}
     \end{subfigure}
     \hfill
        \parbox{11cm}{\caption{Drawdown market generator: drawdown reconstruction loss convergence (rescaled)}}
        \label{DD loss}
\end{figure}

%% References with bibTeX database:
\clearpage
\newpage
\bibliographystyle{model1-num-names}
\bibliography{main.bib}

%% Authors are advised to submit their bibtex database files. They are
%% requested to list a bibtex style file in the manuscript if they do
%% not want to use model1-num-names.bst.

%% References without bibTeX database:

% \begin{thebibliography}{00}

%% \bibitem must have the following form:
%%   \bibitem{key}...
%%

% \bibitem{}

% \end{thebibliography}

\end{document}